\newcommand{\STAB}[1]{\begin{tabular}{@{}c@{}}#1\end{tabular}}
\newcommand{\E}{\mathbb{E}}
\DeclareMathOperator*{\argmin}{argmin} 
\theoremstyle{plain}
\theoremstyle{plain}
\theoremstyle{plain}
\newtheorem{lemma}{\protect\lemmaname}
\theoremstyle{plain}
\newtheorem{theorem}{\protect\theoremname}
\theoremstyle{plain}
\theoremstyle{definition}
\theoremstyle{definition}
\theoremstyle{definition}
\providecommand{\claimname}{Claim}
\providecommand{\lemmaname}{Lemma}
\providecommand{\propositionname}{Proposition}
\providecommand{\theoremname}{Theorem}
\providecommand{\corollaryname}{Corollary} 
\providecommand{\definitionname}{Definition}
\providecommand{\assumptionname}{Assumption}
\providecommand{\remarkname}{Remark}
\numberwithin{theorem}{section}
\numberwithin{lemma}{section}
\numberwithin{equation}{section}
\newcommand{\manuallabel}[2]{\def\@currentlabel{#2}\label{#1}}
\begin{document} 

\title{Fast Splitting Algorithms for \\ Sparsity-Constrained and Noisy Group Testing}
\author{Eric Price, Jonathan Scarlett, and Nelvin Tan}
\date{}

\maketitle

\begin{abstract}
In group testing, the goal is to identify a subset of defective items within a larger set of items based on tests whose outcomes indicate whether at least one defective item is present. This problem is relevant in areas such as medical testing, DNA sequencing, communication protocols, and many more. In this paper, we study (i) a sparsity-constrained version of the problem, in which the testing procedure is subjected to one of the following two constraints: items are finitely divisible and thus may participate in at most $\gamma$ tests; or tests are size-constrained to pool no more than $\rho$ items per test; and (ii) a noisy version of the problem, where each test outcome is independently flipped with some constant probability. Under each of these settings, considering the for-each recovery guarantee with asymptotically vanishing error probability, we introduce a fast splitting algorithm and establish its near-optimality not only in terms of the number of tests, but also in terms of the decoding time.  While the most basic formulations of our algorithms require $\Omega(n)$ storage for each algorithm, we also provide low-storage variants based on hashing, with similar recovery guarantees.
\end{abstract}

\long\def\symbolfootnote[#1]#2{\begingroup\def\thefootnote{\fnsymbol{footnote}}\footnote[#1]{#2}\endgroup}

\symbolfootnote[0]{ E.~Price is with the Department of Computer Science, University of Texas at Austin (e-mail: \url{ecprice@cs.utexas.edu}). J.~Scarlett is with the  Department of Computer Science, the Department of Mathematics, and the Institute of Data Science, National University of Singapore  (e-mail: \url{scarlett@comp.nus.edu.sg}).  N.~Tan is  is with the  Department of Computer Science, National University of Singapore  (e-mail: \url{nelvintan@u.nus.edu}).

E.~Price was supported in part by NSF Award CCF-1751040 (CAREER).  J.~Scarlett was supported by an NUS Early Career Research Award. }

\vspace*{-2ex}
\section{Introduction}

In the group testing problem, the goal is to identify a small subset $\mathcal{S}$ of defective items of size $k$ within a larger set of items of size $n$, based on a number $T$ of tests. This problem is relevant in areas such as medical testing, DNA sequencing, and communication protocols \cite[Sec.~1.7]{Ald19}, and more recently, utility in testing for COVID-19 \cite{Hogan2020,Yelin2020}.

In this paper, we present algorithms for sparsity-constrained (bounded tests-per-item or items-per-test) and noisy variants of group testing with a near-optimal sublinear decoding time, building on techniques recently proposed for the unconstrained noiseless group testing problem \cite{cher20,Eri20}. 
These extensions come with new challenges presented by the infeasibility of the designs in \cite{cher20,Eri20} in the sparsity-constrained setting, and the need to handle both false positive and false negative tests in the noisy setting.

\subsection{Problem Setup}

Let $n$ denote the number of items, which we label as $\{1,\dots,n\}$. Let $\mathcal{S}\subset\{1,\dots,n\}$ denote the fixed set of defective items, and let $k=|\mathcal{S}|$ be the number of defective items.  To avoid cumbersome notation, we present our algorithms in a form that uses $k$ directly; however, the analysis goes through unchanged when an upper bound $\bar{k} \ge k$ is used instead, and $\bar{k}$ replaces $k$ in the number of tests and decoding time.

We are interested in asymptotic scaling regimes in which $n$ is large and $k$ is comparatively small, and thus assume that $k=o(n)$ throughout. We let $T=T(n)$ be the \textit{number of tests} performed. In the noiseless setting, the $i$-th test takes the form 
\begin{align}
    Y^{(i)}=\bigvee_{j\in\mathcal{S}}X_j^{(i)}, \label{eq:test_outcome_formula}
\end{align}
where the test vector $X^{(i)}=\big(X_1^{(i)},\dots,X_n^{(i)}\big)\in\{0,1\}^n$ indicates which items are are included in the test, and $Y^{(i)}\in\{0,1\}$ is the resulting observation, indicating whether at least one defective item was included in the test. The goal of group testing is to design a sequence of tests $X^{(1)},\dots,X^{(T)}$, with $T$ ideally as small as possible, such that the outcomes can be used to reliably recover the defective set $\mathcal{S}$ with probability close to one, while ideally also having a low-complexity decoding procedure. We focus on the non-adaptive setting, in which all tests $X^{(1)},\dots,X^{(T)}$ must be designed prior to observing any outcomes.

We consider the \textit{for-each} recovery guarantee; specifically, we seek to develop a randomized algorithm that, for any fixed defective set $\mathcal{S}$ of cardinality $k$, produces an estimate $\widehat{\mathcal{S}}$ such that the error probability $P_e:=\mathbb{P}\big[\widehat{\mathcal{S}}\neq\mathcal{S}\big]$ is asymptotically vanishing as $n \to \infty$. For all of our algorithms, only the tests $\big\{X^{(i)}\big\}_{i=1}^T$ will be randomized, and the decoding procedure will be deterministic given the test outcomes.

\textbf{Notation.} Throughout the paper, the function $\log(\cdot)$ has base $e$, and we make use of Bachmann-Landau asymptotic
notation (i.e., $O$, $o$, $\Omega$, $\omega$, $\Theta$), as well as the notation $\widetilde{O}(\cdot)$, which omits poly-logarithmic factors in its argument.

\subsubsection{Sparsity-Constrained Setting}

In the sparsity-constrained group testing problem \cite{Ven19}, the testing procedure is subjected to one of two constraints:
\begin{itemize}
    \item Items are \textit{finitely divisible} and thus may participate in at most $\gamma=o(\log n)$ tests;
    \item Tests are \textit{size-constrained} and thus contain no more than $\rho=o(n/k)$ items per test.
\end{itemize}
For instance, in the classical application of testing blood samples for a given disease \cite{Dor43}, the $\gamma$-divisible items constraint may arise when there are limitations on the volume of blood provided by each individual, and the $\rho$-sized test constraint may arise when there are limitations on the number of samples that the machine can accept, or on the number that can be mixed together while avoiding undesirable dilution effects.

It is well known that if each test comprises of $\Theta(n/k)$ items, then $\Theta(\min\{n,k\log n\})$ tests suffice for group testing algorithms with asymptotically vanishing error probability \cite{Cha14,Ald14a,Sca15b,Joh16}.  Moreover, this scaling is known to be optimal \cite{Bay20}.  Hence, the parameter regime of primary interest in the size-constrained setting is $\rho=o(n/k)$. By a similar argument, the parameter regime of primary interest in the finitely divisible setting is $\gamma=o(\log n)$.  

\subsubsection{Noisy Setting} \label{sec:noise_intro}

Generalizing \eqref{eq:test_outcome_formula}, we consider the following widely-adopted symmetric noise model:
\begin{align}
    Y^{(i)}=\bigg(\bigvee_{j\in\mathcal{S}}X_j^{(i)}\bigg)\oplus Z, \label{eq:noise}
\end{align}
where $Z\sim\text{Bernoulli}(p)$ for some $p\in(0,1/2)$, and $\oplus$ denotes modulo-2 addition. While the symmetry assumption may appear to be restrictive, our results and analysis will hold with essentially no change under any non-symmetric random noise model where $0 \to 1$ flips and $1 \to 0$ flips both have probability at most $p$.

Throughout the paper, we will focus {\em separately} on the sparsity-constrained aspects and noisy aspects.  While their joint treatment is also of interest, it was shown in \cite{Ven19} that for finitely divisible items, if the tests are subject to random noise of the form in \eqref{eq:noise}, then the error probability is bounded away from zero regardless of the total number of tests in the finitely-divisible setting with $\gamma = o(\log k)$.  Thus, at least in most scaling regimes of interest, handling noise and finite-divisibility constraints simultaneously would require changing the noise model and/or the recovery criteria, and we make no attempt to do so. On the other hand, for noisy size-constrained tests, schemes that attain asymptotically vanishing error probability do indeed exist \cite{Ven19}.  We still focus on the size-constrained and noisy aspects separately for clarity of exposition, but the two can be combined using our techniques in a straightforward manner, as we briefly discuss in Appendix \ref{sec:non_binary}.

\subsubsection{Mathematical and Computational Assumptions}
Throughout the paper, we assume a word-RAM model of computation; for instance, with $n$ items and $T$ tests, it takes $O(1)$ time to read a single integer in $\{1,\dots,n\}$ from memory, perform arithmetic operations on such integers, fetch a single test outcome indexed by $\{1,\dots,T\}$ and so on.

For simplicity of notation, we assume throughout the analysis that $k$, $n$, and $\rho$ are powers of two. Our algorithm only requires an upper bound on the number of defectives, and hence, any other value of $k$ can simply be rounded up to a power of two. In addition, the total number of items $n$ can be increased to a power of two by adding ``dummy'' non-defective items, and $\rho$ can be rounded down without impacting our final scaling laws (we do not seek to characterize the precise constants).

\subsection{Related Work}

While extensive works have studied the number of tests for various group testing strategies (see \cite{Ald19} for a survey), relatively fewer have sought efficient ${\rm poly}(k \log n)$ decoding time.  For the standard noiseless group testing problem, the most relevant existing results come from two recent concurrent works \cite{cher20,Eri20}, which showed that there exists a non-adaptive group testing algorithm that succeeds with $O(k\log n)$ tests and has $O(k\log n)$ decoding time. We build on these splitting techniques in this paper; the existing approach is outlined in Section \ref{sec:binary_split} below, illustrations of our variants are shown Figures \ref{fig:test_constraint_diagram_3cases}, \ref{fig:size_constraint_diagram}, and \ref{fig:noisy_algo_diag} below, and we highlight the algorithmic differences and key ideas the start of each respective section.

For noiseless sparsity-constrained group testing, the most relevant existing results are summarized in Table \ref{tab:sparse_algo_summary}. Our algorithm for finitely divisible items matches that of the COMP algorithm\footnote{The COMP algorithm simply labels any item in an negative test as non-defective, and all other items as defective.} in the number of tests when $\gamma = \omega(1)$ (and comes close more generally), while having much lower decoding time. Furthermore, our algorithm for size-constrained tests uses an order-optimal $O(n/\rho)$ number of tests, and has matching $O(n/\rho)$ decoding time.

For noisy non-adaptive group testing under the noise model in \eqref{eq:noise}, the most relevant existing results are summarized in Table \ref{tab:noisy_algo_summary}.  Under $\Omega(n)$-decoding time, we note that the references shown are only illustrative examples, and that several additional works also exist with $O(k \log n)$ scaling, e.g., \cite{Mal78,Sca17b,Oli20a}.  More relevant to our work is the fundamental limitation that the works attaining $O(k \log n)$ scaling only attain a quadratic or worse dependence in $k$ in the decoding time (or $\Omega(n)$).  On the other hand, GROTESQUE and SAFFRON attain $k \, {\rm poly}(\log n)$ decoding time, but fail to attain order-optimality in the number of tests.

In a distinct but related line of works, the for-all recovery guarantee (i.e., zero error probability) was considered \cite{Che09,Ind10,Ngo11,Hus19,cher20}, with typical results for the unconstrained setting requiring $O(k^2 \log n)$ tests and ${\rm poly}(k \log n)$ decoding time.  In particular:
\begin{itemize}
    \item In the finitely divisible setting, \cite{Hus19} gives a lower bound of $\Omega\big(\min\big\{n,k^{\frac{2k}{\gamma-1+k}}n^{\frac{k}{\gamma-1+k}}\big\}\big)$ and an algorithm that requires $O\big(\min\big\{n,kn^{\frac{k}{\gamma-1+k}}\big\}\big)$ tests and runs in $\text{poly}(k)+O(T)$ time in the case of $\gamma$-divisible items, and a lower bound of $\Omega\big(k\frac{n}{\rho}\big)$ and an algorithm that requires $T = O\big(k\frac{n}{\rho}\big)$ tests and runs in $\text{poly}(k)+O(T)$ time in the case of $\rho$-sized tests. 
    \item In a setting with {\em adversarial} noise, recovery guarantees were given in \cite[Thms.~3.8 and 3.9]{cher20} with a constraint on the number of false positive tests or false negative tests.  It was left open how to handle both false positives and false negatives simultaneously.
\end{itemize}
Under all variants of the group testing problem that we consider, the stronger for-all guarantee comes at the price of requiring considerably more tests.  Thus, the two types of guarantee are both of significant interest but not directly comparable, and we omit direct comparisons.

Finally, we briefly mention that studies of sublinear-time decoding are prevalent in related problems such as sparse recovery \cite{Cor06,Gil07,Ber08a,Ind11} and the heavy hitters problem \cite{Cor05a,Cor08,Lar19}.  While algorithms for such settings typically do not transfer directly to the group testing problem, we detail one relatively direct approach for the noisy setting in Appendix \ref{sec:non_binary}, and contrast it with our own.  In addition, we note that our work builds primarily on \cite{cher20,Eri20}, which in turn built on tree-based algorithms such as \cite{Cor05a,Ind11}.

\subsection{Overview of Binary Splitting Approach} \label{sec:binary_split}

Since we build directly on the fast binary splitting approach of \cite{cher20,Eri20}, we briefly summarize it here.  An illustration is given in Figure \ref{fig:binary_split}.  The items are arranged into recursively-defined groups in a sequence of levels, where Level 0 contains all items, subsequent levels recursively split the previous groups in half, and the final level contains individual items.  The shaded groups in the left part of Figure \ref{fig:binary_split} are those containing defectives, and an equivalent tree representation (with nodes corresponding to groups) is shown on the right.

At each level,\footnote{For improved efficiency, one can skip the early levels and start where there are $k$ groups of size $\frac{n}{k}$ each, up to rounding.} a suitably-chosen number of \emph{non-adaptive random tests} is performed, where items in a group are always tested together.  Whenever a group is in a negative test, the algorithm knows (in the noiseless setting) that its items must be non-defective.  Hence, when moving from one level to the next, only the sub-groups of groups in positive tests are kept under consideration.  At the final level, sufficiently many random tests are performed to identify the status of every item that has not yet been ruled out.  We refer the reader to \cite{cher20,Eri20} for further details.

\begin{figure}[!t]
    \centering
    \includegraphics[width=0.85\textwidth]{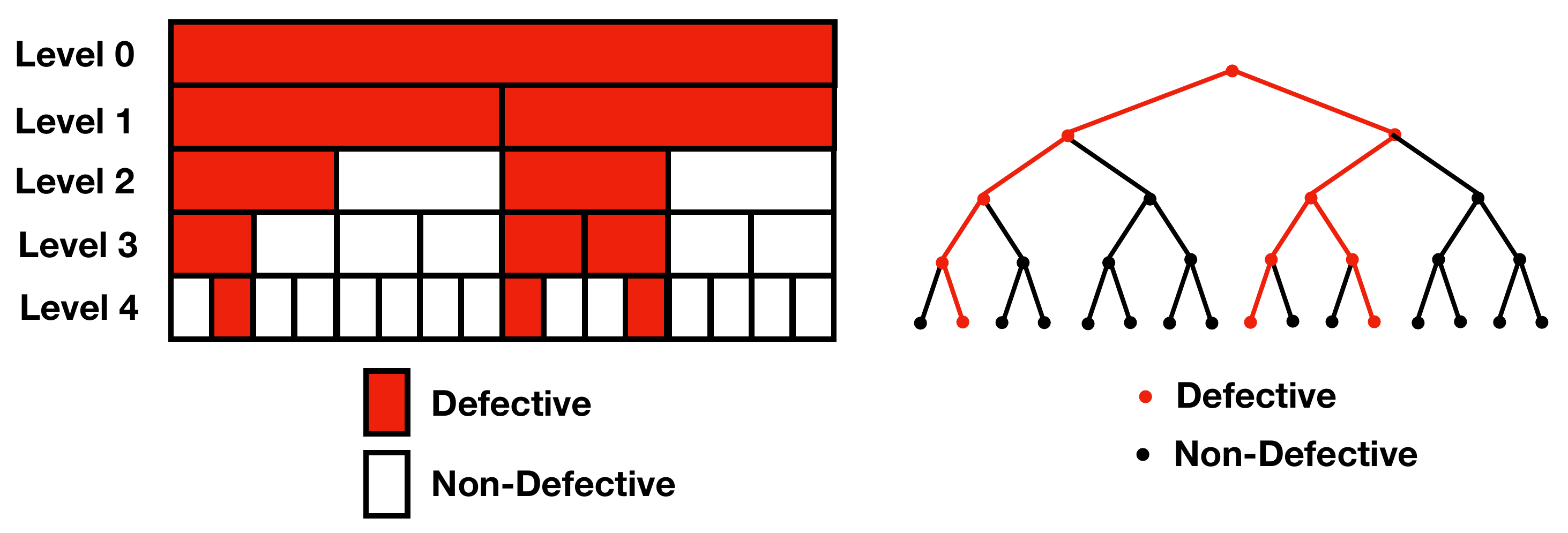}
    \caption{Illustration of the fast binary splitting approach for non-adaptive group testing proposed in \cite{cher20,Eri20}.  Here we have $n=16$ items and $k=3$ defectives; the groupings of items are shown on the left, and the tree representation (where internal nodes correspond to groups of nodes) is shown on the right.} \label{fig:binary_split}
\end{figure}

\begin{table} [t]
\centering
\begin{tabular}{|c|c|c|c|c|}
\hline
& Reference & Number of tests & Decoding time & Construction \\
\hline \hline
\multirow{5}{*}{\STAB{\rotatebox[origin=c]{90}{\makecell{$\gamma$-divis.~items~~}}}}
& Lower Bound \cite{Nel20,Oli20} & $\Omega\big(\gamma k\max\big\{k,\frac{n}{k}\big\}^{1/\gamma}\big)$ & -- & -- \\
& Gandikota {\em et al.} \cite{Ven19} & $O\big(\gamma k^2\big(\frac{n}{k^2}\big)^{1/\gamma}\big)$ & $O\big(k^2\log\big(\frac{n}{k^2}\big)\big)$ & Explicit \\
& COMP \cite{Ven19} & $\widetilde{O}(\gamma kn^{1/\gamma})$ & $\Omega(n)$ & Randomized \\
& DD \cite{Oli20} & $O\big(\gamma k\max\big\{k,\frac{n}{k}\big\}^{\frac{1}{\gamma}}\big)$ & $\Omega(n)$ & Randomized \\
& This Paper & $\widetilde{O}(\gamma kn^{1/\gamma})$ & $O(\gamma kn^{1/\gamma})$ & Randomized \\
\hline
\hline
\multirow{5}{*}{\STAB{\rotatebox[origin=c]{90}{\makecell{$\rho$-sized tests~~}}}}
& Lower Bound \cite{Ven19,Oli20} & $\Omega\big(\frac{n}{\rho}\big)$ & -- & -- \\
& Gandikota {\em et al.} \cite{Ven19} & \makecell{$O\big(\max\big\{\frac{n}{\rho}\log\rho,$\\$k^2\log\big(\frac{n}{k^2}\big)\big\}\big)$} & $O(T)$ & Explicit \\
& COMP \& DD \cite{Ven19, Oli20} & $O\big(\frac{n}{\rho}\big)$ & $\Omega(n)$ & Randomized \\
& This Paper & $O\big(\frac{n}{\rho}\big)$ & $O\big(\frac{n}{\rho}\big)$ & Randomized \\
\hline
\end{tabular}
\caption{Overview of noiseless non-adaptive sparsity-constrained group testing results under the for-each guarantee. For entries containing $\widetilde{O}(\cdot)$ notation, the results correspond to $\frac{1}{{\rm poly}(\log n)}$ error probability, but more general variants are also available. A construction is said to be explicit if its test matrix can be computed deterministically in ${\rm poly}(n)$ time; the results shown for explicit constructions additionally require $k = O(\sqrt n)$.}
\label{tab:sparse_algo_summary}
\end{table}

\begin{table} [t]
    \centering
    \begin{tabular}{|c|c|c|c|}
        \hline
        Reference & Number of tests & Decoding time & Construction \\
        \hline \hline
        Lower Bound \cite{Mal78} & $\Omega\big(k\log\frac{n}{k}\big)$ & -- & -- \\
        Inan {\em et al.} \cite{Ina19} & $O(k\log n)$ & $\Omega(n)$ & Explicit \\
        Inan {\em et al.}~(fast) \cite{Ina20} & $O(k\log n)$ & $O\big(k^3\cdot\log k+k\log n\big)$ & Explicit \\
        NCOMP \& NDD \cite{Cha14,Sca18b} & $O(k\log n)$ & $\Omega(n)$ & Randomized \\
        GROTESQUE \cite{Cai13} & $O(k\cdot\log k\cdot\log n)$ & $O\big(k(\log n+\log^2k)\big)$ & Randomized \\
        SAFFRON \cite{Lee16} & $O(k\cdot\log k\cdot\log n)$ & $O(k\cdot\log k\cdot \log n)$ & Randomized \\
        BMC \cite{Bon19a} & $O(k\log n)$ & $O(k^2\cdot \log k\cdot \log n)$ & Randomized \\
        This Paper & $O(k\log n)$ & $O\big(\big(k\log\frac{n}{k}\big)^{1+\epsilon}\big)$ & Randomized \\
        \hline
    \end{tabular}
    \caption{Overview of noisy non-adaptive group testing results under the for-each guarantee and the noise model in \eqref{eq:noise}. A construction is said to be explicit if its test matrix can be computed deterministically in ${\rm poly}(n)$ time, and in the final row, $\epsilon$ is an arbitrarily small positive constant.}
    \label{tab:noisy_algo_summary}
\end{table}

\subsection{Summary of Results}

Here we informally summarize our main results, formally stated in Theorems \ref{thm:noisy_main_theorem}, \ref{thm:gamma_main_theorem}, and \ref{thm:rho_main_theorem}.

\begin{itemize}
    \item \textbf{Finitely divisible items:} A special case of our result states that for any $\beta_n = \frac{1}{{\rm poly}(\log n)}$, there exists a non-adaptive group testing algorithm that succeeds with probability $1-O(\beta_n)$ using $\widetilde{O}\big(\gamma kn^{1/\gamma}\big)$ tests and $O\big(\gamma kn^{1/\gamma}\big)$ decoding time provided that $\gamma = \omega(1)$.  The case of finite $\gamma$ will also be handled with only slightly worse scaling laws, and we will specify the precise dependence on $\beta_n$, without resorting to $\widetilde{O}(\cdot)$ notation.
    \item \textbf{Size-constrained tests:} For any $\zeta>0$, there exists a non-adaptive group testing algorithm that succeeds with probability $1-O\big(n^{-\zeta}\big)$ using $O\big(n/\rho\big)$ tests and $O\big(n/\rho\big)$ decoding time.
    \item \textbf{Noisy setting:} For any parameters $t=O(1)$ and $\epsilon\in(1/t,1)$, there exists a non-adaptive group testing algorithm that succeeds with probability $1-O\big(\big(k\log\frac{n}{k}\big)^{1-\epsilon t}\big)$ using $O(k\log n)$ tests and $O\big(\big(k\log\frac{n}{k}\big)^{1+\epsilon}\big)$ decoding time.
\end{itemize}
We observe that in the sparsity-constrained setting, our decoding time matches the number of tests, whereas previous algorithms using the same number of tests incurred $\Omega(n)$ decoding time.  Similarly, in the noisy setting, we significantly improve on the best previous known decoding time among any algorithm using an order-optimal $O(k \log n)$ number of tests.  Specifically, \cite{Bon19a} incurred a quadratic dependence on $k$, whereas we incur a near-linear dependence.

Each of the above results comes with significant differences in the algorithms and mathematical analyses compared to the noiseless unconstrained setting handled in \cite{Eri20,cher20}.  We defer discussions of these differences to the beginning of the respective sections to follow.

While our focus is on the number of tests and decoding time, another important practical consideration is the storage required.  Naively, the algorithms attaining the above results require $\Omega(n)$ storage.  However, in Appendix \ref{sec:storage_reductions}, we discuss storage reductions via hashing, attaining identical results with sublinear storage in the size-constrained and noisy settings, and similar (but slightly weaker) results in the finitely divisible setting.


\section{Algorithm for Finitely Divisible Items}

Our algorithm (both here and in subsequent sections) resembles the non-adaptive binary splitting approach of \cite{cher20,Eri20}.  At a high level, we form large groups of items and recursively split them into smaller sub-groups, then randomly place groups into tests.  The decoder works down the resulting tree (see Figure \ref{fig:test_constraint_diagram_3cases}), eliminating groups that are believed to be defective based on the test outcomes, while recursively handling all remaining groups.

We highlight the following differences compared to the binary splitting approach \cite{cher20,Eri20}:
\begin{itemize}
    \item We use a shorter tree of height $\gamma'\leq\gamma$.  This is because a given item is placed in a single test at each level, so the assumption $\gamma = o(\log n)$ prohibits us from having $O(\log n)$ levels.  We consider $\gamma' \le \gamma$ so that the remaining budget can be used at the final level, and we later optimize $\gamma'$ to minimize the number of tests.
    \item In view of the shorter height, we use {\em non-binary} splitting; this was considered under adaptive testing in \cite{Nel20,Oli20}, and our algorithm can be viewed as a non-adaptive counterpart, in the same way that \cite{cher20,Eri20} can be viewed as a non-adaptive counterpart of Hwang's binary splitting algorithm \cite{Hwa72}.
    \item In contrast to the unconstrained setting, we cannot readily use the idea of using $N$ sequences of tests at each level while only increasing the number of tests by a factor of $N = O(1)$.  Here, such an approach turns out to be highly wasteful in terms of its use of the limited $\gamma$ budget, and we avoid it altogether.
    \item At the top level of the tree (excluding the root), we use individual testing (i.e., each node has its own test). This guarantees that no non-defective node from the second level can ``continue'' down the tree, which simplifies our analysis.
\end{itemize}

\subsection{Description of the Algorithm} \label{sec:gamma_algo_descrip}

The levels of the tree, summarized in Figure \ref{fig:test_constraint_diagram_3cases}, are indexed by $l=0,1,\dots,\gamma'$. Since testing at the root is not informative (we will always get a positive outcome), we start our testing procedure at $l=1$ (the second level of nodes in Figure \ref{fig:test_constraint_diagram_3cases}). We choose\footnote{Here and subsequently, we assume for notation convenience that $(n/k)^{1/\gamma}$ and $(n/k)^{1/\gamma'}$ are integers.  Since we focus on scaling laws, the resulting effect of rounding has no impact on our results.} $M=(n/k)^{\frac{\gamma'-1}{\gamma'}}$, $T_{\text{len}}=Ck(n/k)^{1/\gamma'}$ and $T'_{\text{len}}=\gamma'k(n/k)^{1/\gamma'}$, where $C$ is a constant.  Here the choice of $M$ is taken to match the near-optimal adaptive splitting algorithm of \cite{Nel20}, and the choices of $T_{\text{len}}$ and $T'_{\text{len}}$ are motivated by the goal of having a number of tests matching the COMP algorithm (see Table \ref{tab:sparse_algo_summary}).  Under these preceding choices, the total number of tests (excluding the last level) is given by
\begin{align}
    \underbrace{\frac{n}{M}}_{l=1}+\underbrace{\gamma'\cdot Ck\Big(\frac{n}{k}\Big)^{\frac{1}{\gamma'}}}_{l=2,\dotsc,\gamma'-2}+\underbrace{\gamma'k\Big(\frac{n}{k}\Big)^{\frac{1}{\gamma'}}}_{l=\gamma'-1}
    &=O\bigg(\gamma'k\Big(\frac{n}{k}\Big)^{\frac{1}{\gamma'}}\bigg).
\end{align} 
The overall testing procedure is described in Algorithm \ref{alg:gamma_nonadap_testing}, and the decoding procedure is described in Algorithm \ref{alg:gamma_nonadap_decoding}.  The $j$-th node at the $l$-th level is again written as $\mathcal{G}_j^{(l)}$. 

Here and subsequently, we assume that $\gamma \ge 3$.  We note that the case $\gamma = 1$ is trivial, and while $\gamma = 2$ could be handled by omitting the step at level $l = \gamma'$ containing $T''_{\rm len}$ tests, this variant is omitted for the sake of brevity.

\begin{figure}[!t]
  \centering
  \includegraphics[scale=0.4]{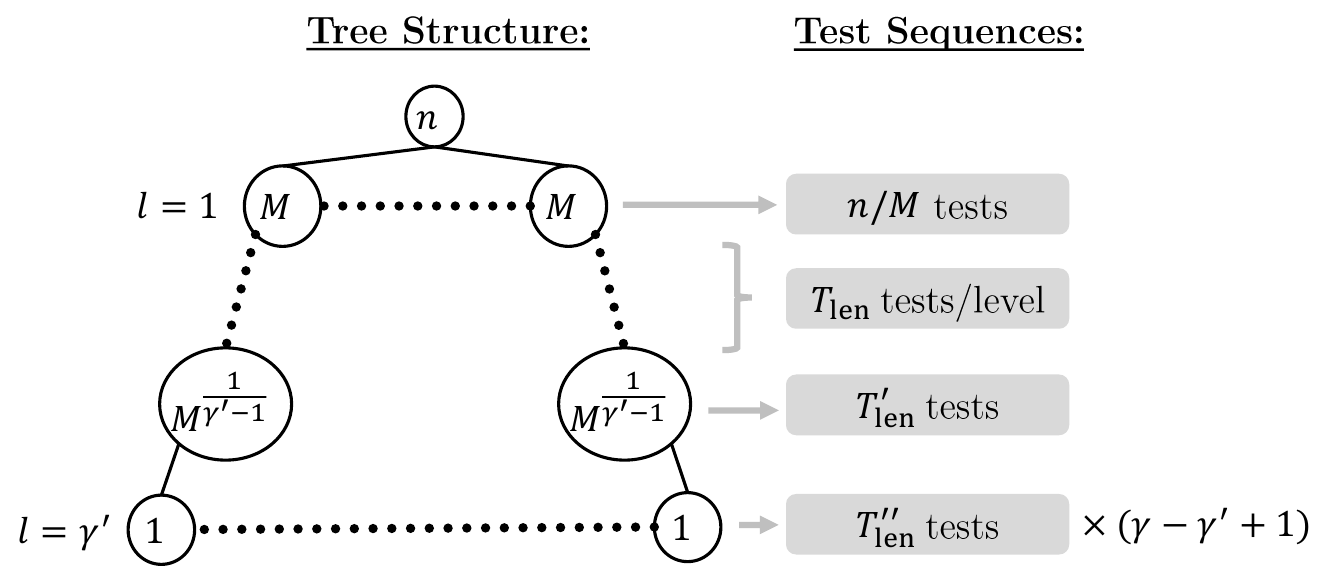}
  \caption{Tree structure of our algorithm. After the first level, the branching factor is $M^{\frac{1}{\gamma'-1}}$.} \label{fig:test_constraint_diagram_3cases}
\end{figure}

\begin{algorithm}[!t]
    \begin{algorithmic}[1]
        \REQUIRE Number of items $n$, number of defective items $k$, divisibility of each item $\gamma$, and parameters $\gamma'$, $M$, $T_{\text{len}}$, $T_{\text{len}}'$, and $T_{\text{len}}''$
        \STATE At $l=1$, test each node separately in a single test (no randomization).
        \FOR{each $l=2,3,\dots,\gamma'-1$}{
        \STATE \textbf{if} $l=\gamma'-1$ \textbf{then} form a sequence of tests of length $T'_{\text{len}}$.
        \STATE \textbf{else} form a sequence of tests of length $T_{\text{len}}$.
        \FOR{$j=1,2,\dots,\frac{n}{M}(M)^{(l-1)/(\gamma'-1)}$}{
        \STATE Place all items from $\mathcal{G}_j^{(l)}$ into a single test within the sequence just formed, chosen uniformly at random.
        }\ENDFOR
        }\ENDFOR
        \STATE For $l=\gamma'$, form $\gamma-\gamma'+1$ sequences of tests, each of length $T''_{\text{len}}$.
        \FOR{each singleton at the final level}{
        \FOR{each of the $\gamma-\gamma'+1$ sequences of tests}{
        \STATE Place the item in one of the tests in the sequence, chosen uniformly at random.
        }\ENDFOR
        }\ENDFOR
    \end{algorithmic}
    \caption{Testing procedure for $\gamma$-divisible items \label{alg:gamma_nonadap_testing}}
\end{algorithm}

\begin{algorithm}[!t]
    \begin{algorithmic}[1]
        \REQUIRE Outcomes of $T$ non-adaptive tests, number of items $n$, number of defective items $k$, divisibility of each item $\gamma$, and parameters $\gamma'$, $M$, $T_{\text{len}}$, $T_{\text{len}}'$, and $T_{\text{len}}''$
        \STATE Initialize $\mathcal{PD}^{(l_{\text{min}})}=\big\{\mathcal{G}_j^{(l_{\text{min}})}\big\}_{j=1}^{n/M}$, where $l_{\text{min}}=1$.
        \STATE Place all nodes at $l=1$ with a positive test outcome into $\mathcal{PD}^{(l_{\text{min}})}$.
        \FOR{$l=2,3,\dots,\gamma'-1$}{
        \FOR{each group $\mathcal{G}\in\mathcal{PD}^{(l)}$}{
        \STATE Check whether the single test corresponding to $\mathcal{G}$ is positive or negative.
        \STATE \textbf{if} the test is positive \textbf{then} add all $M^{1/(\gamma'-1)}$ children of $\mathcal{G}$ to $\mathcal{PD}^{(l+1)}$
        }\ENDFOR
        }\ENDFOR
        \STATE Let the estimate $\widehat{\mathcal{S}}$ of the defective set be the elements in $\mathcal{PD}^{(\gamma')}$ that are not included in any of the negative tests from the remaining $(\gamma-\gamma'+1)T_{\text{len}}''$ tests.
        \STATE Return $\widehat{\mathcal{S}}$.
    \end{algorithmic}
    \caption{Decoding procedure for $\gamma$-divisible items \label{alg:gamma_nonadap_decoding}}
\end{algorithm}

\subsection{Algorithmic Guarantees}

\begin{theorem} \label{thm:gamma_main_theorem}
Let $\mathcal{S}$ be a fixed (defective) subset of $\{1,\dots,n\}$ of cardinality $k$, and let $\gamma=o(\log n)$ (with $\gamma \ge 3$) be the maximum number of times each item can be tested, and fix $\gamma'\in\{3,\dots,\gamma\}$ and any function $\beta_n$ decaying as $n$ increases. There exist choices\footnote{Specifically, we will set $T_{\text{len}}=O\big( k(n/k)^{1/\gamma'} \big)$, $T'_{\text{len}}=\gamma'k(n/k)^{1/\gamma'}$, and $T_{\text{len}}''=k(k/\beta_n)^{\frac{1}{\gamma-\gamma'+1}}(n/k)^{\frac{1}{\gamma'(\gamma-\gamma'+1)}}$.} of $T_{\rm len}$, $T'_{\rm len}$, and $T''_{\rm len}$ such that with 
\begin{align}
    T&=O\bigg(\gamma k\max\bigg\{\Big(\frac{n}{k}\Big)^{\frac{1}{\gamma'}},
    \Big(\frac{k}{\beta_n}\Big)^{\frac{1}{\gamma-\gamma'+1}}\Big(\frac{n}{k}\Big)^{\frac{1}{\gamma'(\gamma-\gamma'+1)}}\bigg\}\bigg),
    \label{eq:gamma_main_thm_test} 
\end{align}
the preceding algorithm satisfies the following with probability at least $1-O(\beta_n)-e^{-\Omega(k)}$:
\begin{itemize}
    \item The returned estimate $\widehat{\mathcal{S}}$ equals $\mathcal{S}$;
    \item The decoding time is\footnote{In certain scaling regimes, this decoding time may be lower than the number of tests. This is because the algorithm sequentially decides which tests outcomes to observe, and does not necessarily end up observing every outcome.} $O\big(\gamma k(n/k)^{1/\gamma'}\big)$.
\end{itemize}
\end{theorem}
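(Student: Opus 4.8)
The plan is to separate the analysis into a deterministic part (no defective item is ever discarded) and a probabilistic part (few non-defective items survive). Write $b := (n/k)^{1/\gamma'}$ for the branching factor below the first level, and note that $n/M = kb$ and $M^{1/(\gamma'-1)} = b$. First I would observe that $\mathcal{S} \subseteq \widehat{\mathcal{S}}$ holds deterministically: in the noiseless model any node containing a defective item lies in a positive test at every level, so the unique level-$l$ node containing a given defective is never pruned, and at the final level a defective singleton appears in no negative test; hence every defective is returned. The work therefore reduces to bounding the number of surviving non-defective nodes.

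For the probabilistic part, let $A_l$ denote the number of nodes in $\mathcal{PD}^{(l)}$ that are positive (equivalently, whose children are generated), and split $A_l = D_l + F_l$ into defective nodes and false positives. Since each of the $k$ defectives lies in one node per level, $D_l \le k$ throughout, and individual testing at $l=1$ gives $F_1 = 0$. The candidates entering level $l+1$ are the $A_l \cdot b$ children of the positive level-$l$ nodes; conditioned on the (at most $k$) tests occupied by defective nodes at level $l+1$, each non-defective candidate is placed among $T_{\text{len}}$ tests independently and survives only by colliding with a defective node, i.e.\ with probability at most $k/T_{\text{len}} = 1/(Cb)$. Thus, conditionally, $F_{l+1}$ is stochastically dominated by a sum of independent Bernoulli variables with mean at most $A_l/C$. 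I would then run an induction over levels: assuming $A_l \le 2k$, a Chernoff bound shows $F_{l+1} \le k$ (hence $A_{l+1} \le 2k$) except with probability $e^{-\Omega(k)}$, taking $C$ a large enough constant; a union bound over the $\gamma' = o(\log n)$ levels gives $A_l = O(k)$ for all $l$ simultaneously with probability $1 - e^{-\Omega(k)}$. The slightly larger $T'_{\text{len}}$ at level $\gamma'-1$ only strengthens this bound.

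On the event $\{A_l = O(k)\ \forall l\}$, both remaining claims follow by bookkeeping. The decoder touches $O(A_{l-1} b) = O(kb)$ candidate nodes at each of the $\gamma'$ internal levels, so its running time is $O(\gamma' kb) = O(\gamma k (n/k)^{1/\gamma'})$, which also dominates the final-level work of scanning the $O(kb)$ surviving singletons across the $\gamma-\gamma'+1$ sequences. The number of tests is obtained by summing per-level counts: levels $1$ through $\gamma'-1$ contribute $O(\gamma' k b)$, while level $\gamma'$ contributes $(\gamma-\gamma'+1)T''_{\text{len}}$; substituting $T''_{\text{len}}$ yields the second term inside the maximum in \eqref{eq:gamma_main_thm_test}. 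For correctness, the remaining error event is a non-defective singleton that survives to $\mathcal{PD}^{(\gamma')}$ and then evades all $\gamma-\gamma'+1$ final sequences. There are $O(kb)$ such surviving singletons, and each is falsely retained only if its independent test in every one of the $\gamma-\gamma'+1$ sequences collides with a defective, which happens with probability at most $(k/T''_{\text{len}})^{\gamma-\gamma'+1}$. The stated choice of $T''_{\text{len}}$ makes this probability exactly $\beta_n/(kb)$, so the expected number of final-level false positives is $O(\beta_n)$; Markov's inequality bounds the corresponding error probability by $O(\beta_n)$, and combining with the concentration event gives the overall $1 - O(\beta_n) - e^{-\Omega(k)}$ guarantee.

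The main obstacle is the multi-level concentration argument: the candidate set at each level is itself random and correlated with the survival events at earlier levels, so the counts $F_2, F_3, \dots$ do not evolve independently. I would handle this by conditioning on the full configuration up to level $l$ (in particular on $\mathcal{PD}^{(l)}$ and on the defective placements at level $l+1$), so that, given this information, the survival of distinct non-defective candidates becomes a family of independent Bernoulli trials to which Chernoff applies; the induction then propagates $A_l = O(k)$ downward through the tree. The one point requiring care is that the union bound over the $\gamma'$ levels must be absorbed into the $e^{-\Omega(k)}$ term, which relies on $\gamma' = o(\log n)$.
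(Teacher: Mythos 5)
Your proposal is correct, but the central concentration step takes a genuinely different route from the paper. The paper bounds the total number of reached non-defective nodes $N_{\rm total}$ across all levels in a single shot: it shows each non-defective sub-tree hanging off the defective tree has a sub-exponentially distributed number of branching nodes (via a counting argument with Fuss--Catalan numbers bounding the number of full $M^{1/(\gamma'-1)}$-ary trees with a given number of internal nodes), then sums $O(\gamma' k)$ independent sub-exponential variables to get $N_{\rm total} = O(\gamma' k (n/k)^{1/\gamma'})$ with probability $1-e^{-\Omega(\gamma' k)}$. You instead run a level-by-level induction on the number $A_l$ of \emph{positive} nodes, showing $A_l \le 2k$ at every level via a binomial Chernoff bound on the false positives among the $A_{l-1} b$ candidates; this is arguably more elementary and yields the same $O(\gamma' k b)$ bound on the total work and the same $O(kb)$ bound on the surviving singletons, and the deterministic inclusion $\mathcal{S} \subseteq \widehat{\mathcal{S}}$ plus the final-level analysis (expected $O(\beta_n)$ false positives, then Markov) coincide with the paper's. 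The price of your route is the union bound over the $\gamma'-1$ levels, giving failure probability $\gamma' e^{-\Omega(k)}$ rather than the paper's $e^{-\Omega(\gamma' k)} + e^{-\Omega(k)}$; this is absorbed into $e^{-\Omega(k)}$ only when $k = \Omega(\log \gamma')$, e.g., $k = \Omega(\log\log n)$, which is an extremely mild restriction (and the theorem's $e^{-\Omega(k)}$ term is essentially vacuous below that anyway) but worth acknowledging. One small imprecision: your remark that the larger $T'_{\rm len} = \gamma' k (n/k)^{1/\gamma'}$ at level $\gamma'-1$ ``only strengthens'' the bound is wrong when $\gamma' < C$; however, the collision probability there is still $1/(\gamma' b) \le 1/(3b)$, so the conditional mean of the false positives is at most $2k/3 < k$ and the Chernoff step still closes with $e^{-\Omega(k)}$, which is in fact exactly how the paper treats that level.
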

The proof of Theorem \ref{thm:gamma_main_theorem} is given in Appendix \ref{sec:gamma_algo_analysis}.  It consists of bounding the probabilities of non-defective nodes being ``reached'' (i.e., considered possibly defective in Line 4 of Algorithm \ref{alg:gamma_nonadap_decoding}) based on their distance to the nearest defective node.  More distant nodes have a smaller associated probability, and we can leverage this to bound the overall number of nodes visited (and hence the decoding time).  A separate analysis is also performed for the final level to show that the final estimate is correct.

\begin{figure}[!t]
  \centering
  \includegraphics[scale=0.5]{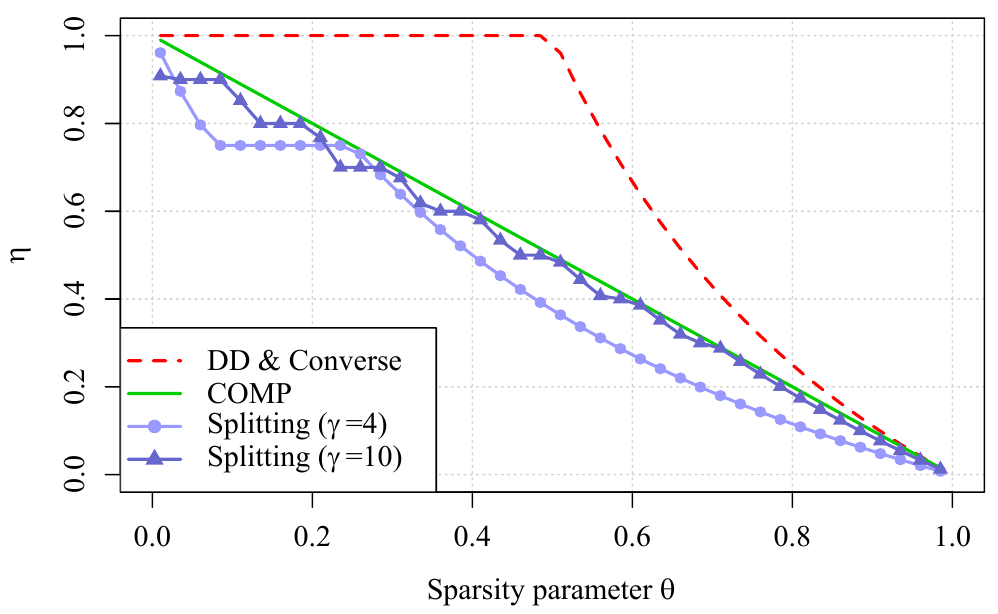}
  \caption{Plot of the asymptotic quantity $\eta$ (see \eqref{eq:eta}) against the sparsity parameter $\theta$ for the converse (i.e., the lower bound) \cite{Oli20}, the DD algorithm \cite{Oli20}, the COMP algorithm \cite{Ven19}, and our splitting algorithm (with $\gamma=4$ and $\gamma = 10$).} \label{fig:eta_plot_diff_gammas}
  \vspace*{-3ex}
\end{figure}

In order to better understand this bound on $T$, we consider $k=\Theta\big(n^{\theta}\big)$ for some $\theta\in[0,1)$, and $\beta_n = \frac{1}{{\rm poly}(\log n)}$.  These choices allow us to hide the dependence on $\beta_n$ in $\widetilde{O}(\cdot)$ notation and focus on the remaining terms. Substituting $k=\Theta\big(n^{\theta}\big)$ into \eqref{eq:gamma_main_thm_test}, we obtain
\begin{align}
    T&=\widetilde{O}\Big(\gamma k\max\Big\{n^{\frac{1-\theta}{\gamma'}},
    n^{\frac{\theta}{\gamma-\gamma'+1}+\frac{1-\theta}{\gamma'(\gamma-\gamma'+1)}}\Big\}\Big). \label{eq:regime_specific_T}
\end{align}
Momentarily ignoring the integer constraint on $\gamma'$, we obtain the optimal $\gamma'$ by solving $\frac{1-\theta}{\gamma'}=\frac{\theta}{\gamma-\gamma'+1}+\frac{1-\theta}{\gamma'(\gamma-\gamma'+1)}$, which simplifies to $\gamma'=(1-\theta)\gamma$. Substituting $\gamma'=(1-\theta)\gamma$ back into \eqref{eq:gamma_main_thm_test} gives $T=\widetilde{O}\big(\gamma kn^{1/\gamma}\big)$. In addition, by the same substitution, we obtain $O\big(\gamma kn^{1/\gamma}\big)$ decoding time. In this case, the bound on $T$ is the same as the bound for the COMP algorithm (see Table \ref{tab:sparse_algo_summary}).

When $\gamma = \omega(1)$, it is straightforward to establish that the integer constraint on $\gamma'$ does not impact the above findings. However, when $\gamma = O(1)$, we need to account for the integer constraint.  One could naively search over $\gamma'\in\{3,\dots,\gamma\}$, but in Appendix \ref{sec:convex}, we use a convexity argument to show that considering $\gamma'\in\{3,\lfloor(1-\theta)\gamma\rfloor,\lceil(1-\theta)\gamma\rceil\}$ is sufficient.

To see how our algorithm compares to optimal behavior established in \cite{Oli20} (i.e., an upper bound for the DD algorithm, and a matching algorithm-independent lower bound) and the COMP algorithm \cite{Ven19} for different values of $\gamma$, we introduce the following quantity:
\begin{align}
    \eta&=\lim_{n\rightarrow\infty}\frac{\log\big(\frac{n}{k}\big)}{\gamma\log\big(\frac{T}{\gamma k}\big)}. \label{eq:eta}
\end{align}
Observe that for any fixed value of $\eta>0$, re-arranging gives $T\sim\gamma k\big(\big(\frac{n}{k}\big)^{1/\gamma}\big)^{(1+o(1))/\eta}$. With $\eta$ defined, we compare the performance in Figure \ref{fig:eta_plot_diff_gammas}. We observe that the splitting algorithm's curve quickly gets closer to the COMP algorithm's curve even for fairly low values of $\gamma$.  On the other hand, matching the DD algorithm's curve with sublinear decoding time remains an interesting open challenge for future work.

\section{Algorithm for Size-Constrained Tests}

In the case of size-constrained tests, we again modify the tree structure (see Figure \ref{fig:size_constraint_diagram}), and the main differences from the standard noiseless algorithm \cite{cher20,Eri20} are as follows:
\begin{itemize}
    \item The first level after the root is chosen to have groups of size $\rho$, since larger groups are prohibited.  In addition, at this level with nodes of size $\rho$, we test each node individually, guaranteeing that we only ``continue'' down the tree for defective nodes at that level.
    \item We use non-binary splitting, geometrically decreasing the node size at each level until the final level with size one.  We limit the number of levels to be $O(1)$, whereas binary splitting would require $O(\log \rho)$ levels, and (at least when using a similar level-by-level test design) would increase the number of tests by an $O(\log \rho)$  factor.
    \item We do not independently place nodes into tests, since doing so would cause a positive probability of violating the $\rho$-sized test constraint.  Instead, at each level, we create a random testing sub-matrix with a column weight of exactly one, and a row weight exactly equal to to $\frac{\rho}{\text{node~size}}$. A similar doubly-constant test design was also adopted in \cite{Ven19}, but without the tree structure.
\end{itemize}
We now proceed with a more detailed description.

\subsection{Description of the Algorithm} \label{sec:rho_algo_descrip}

Our algorithm works with a tree structure (see Figure \ref{fig:size_constraint_diagram}) similar to previous sections.  The $j$-th node at the $l$-th level is again denoted by $\mathcal{G}_j^{(l)}$.  A distinction here as that the tree only has a constant depth, with the final index denoted by $C = O(1)$; hence, the splits are $\rho^{1/C}$-ary.\footnote{For notational convenience, we assume that $\rho^{1/C}$ is an integer. Since we already assumed that $\rho$ is a power of two, if $\rho=O(1)$, then it will suffice to let $C$ be that power (see Lemma \ref{lem:asymp}, in which we handle the case $\rho = O(1)$ separately). Otherwise, if $\rho = \omega(1)$, then the rounding is insignificant since $C = O(1)$.}  More importantly, there are key differences in the allocation of items to tests, which we describe as follows.  

At each level $l$, we perform $N$ independent iterations to boost the error probability, as mentioned above.  Within each iteration, we make use of a random matrix, which we write as $\mathsf{X}_l=\big[x_{ti}^{(l)}\big]\in\{0,1\}^{\text{\#tests}\times\text{\#nodes}}$ (the dependence on the iteration number is left implicit), where $\text{\#tests}=n/\rho$ and $\text{\#nodes}=\frac{n}{\rho^{1-l/C}}$. We pick $\mathsf{X}_l$ by sampling uniformly from all $\frac{n}{\rho}\times\frac{n}{\rho^{1-l/C}}$ matrices with exactly $\rho^{l/C}$ nodes per test (i.e., a row weight of $\rho^{l/C}$), and each node sampled exactly once (i.e., a column weight of one).  These choices ensure that each test contains at most $\rho$ items, as required.  The column weight of one is not strictly imposed by the testing constraints, but helps in avoiding ``bad'' events where some nodes are not tested.

With this notation in place, the testing procedure is formally described in Algorithm \ref{alg:rho_nonadap_testing}, and the decoding  procedure is described in Algorithm \ref{alg:rho_nonadap_decoding}.

\begin{figure}[!t]
  \centering
  \includegraphics[scale=0.4]{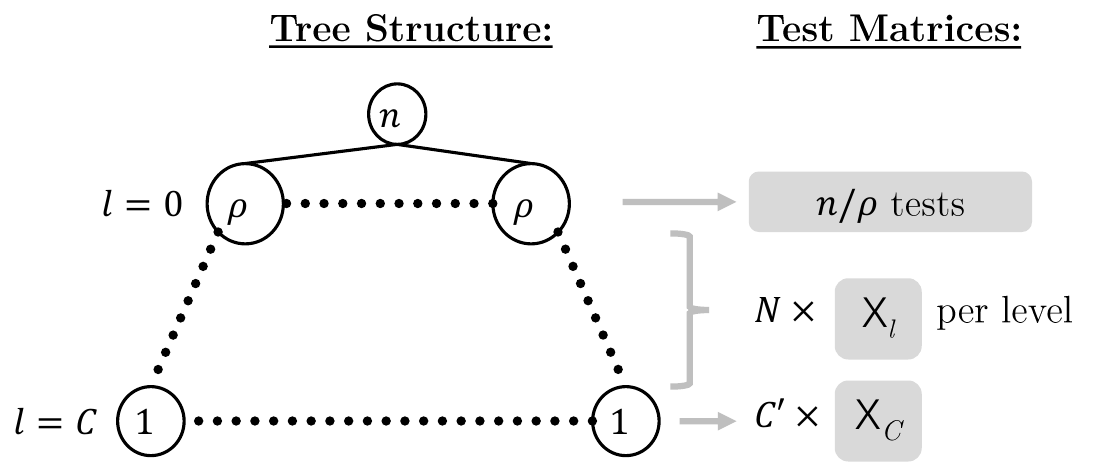}
  \caption{Tree structure of our algorithm. After the first level, the branching factor is $\rho^{1/C}$.} \vspace*{-2ex} \label{fig:size_constraint_diagram}
\end{figure}

\begin{algorithm}[!t]
    \begin{algorithmic}[1]
        \REQUIRE Number of items $n$, number of defective items $k$, and maximal test size $\rho$
        \STATE At level $l=0$ (see Figure \ref{fig:size_constraint_diagram}), perform an individual test for each node.
        \FOR{each $l=1,\dots,C-1$ (for some constant $C$ chosen later)}{
        \FOR{each iteration in $\{1,\dots,N\}$ (for some constant $N\geq1$)} {
        \STATE Pick a new $\mathsf{X}_l$ of size $\frac{n}{\rho}\times\frac{n}{\rho^{1-l/C}}$, with column weight 1 and row weight $\rho^{l/C}$.
        \FOR{each row $t$ in $\mathsf{X}_l$}{
        \STATE Conduct a single test containing the nodes $\mathcal{G}_j^{(l)}$ with $x_{tj}^{(l)}=1$.
        }\ENDFOR
        }\ENDFOR
        }\ENDFOR
        {\em Final level:}
        \FOR{each iteration in $\{1,\dots,C'\}$ (for some constant $C'$ chosen later)}{
        \STATE Pick a new $\mathsf{X}_C$ of size $\frac{n}{\rho}\times n$, with column weight 1 and row weight $\rho$.
        \FOR{each row $t$ in $\mathsf{X}_C$}{
        \STATE Conduct a single test containing the (singleton) nodes $\mathcal{G}_j^{(l)}$ with $x_{tj}^{(l)}=1$.
        }\ENDFOR
        }\ENDFOR
    \end{algorithmic}
    \caption{Testing procedure for $\rho$-sized tests \label{alg:rho_nonadap_testing}}
\end{algorithm}

\begin{algorithm}[!t]
    \begin{algorithmic}[1]
        \REQUIRE Outcomes of $T$ non-adaptive tests, number of items $n$, number of defective items $k$, and maximal test size $\rho$
        \STATE Initialize $\mathcal{PD}^{(0)}=\{\mathcal{G}_j^{(0)}\}_{j=1}^{n/\rho}$
        \FOR{each group $\mathcal{G}\in\mathcal{PD}^{(0)}$}{
        \STATE \textbf{if} the single test of $\mathcal{G}$ is positive \textbf{then} add all children of $\mathcal{G}$ to $\mathcal{PD}^{(1)}$
        }\ENDFOR
        \FOR{$l=1,\dots,C-1$}{
        \FOR{each group $\mathcal{G}\in\mathcal{PD}^{(l)}$}{
        \STATE \textbf{if} all $N$ tests of $\mathcal{G}$ are positive \textbf{then} add all children of $\mathcal{G}$ to $\mathcal{PD}^{(l+1)}$
        }\ENDFOR
        }\ENDFOR
        \STATE Let the estimate $\widehat{\mathcal{S}}$ be the set of elements in $\mathcal{PD}^{(C)}$ that are not included in any negative test at the final level.
        \STATE Return $\widehat{\mathcal{S}}$
    \end{algorithmic}
    \caption{Decoding procedure for $\rho$-sized tests \label{alg:rho_nonadap_decoding}}
\end{algorithm}

\subsection{Algorithmic Guarantees}

We are now ready to state our main result for the case of size-constrained tests.   In this case, we slightly strengthen the assumption $k = o(n)$ to $k = n^{1-\Omega(1)}$, and we slightly strengthen the assumption $\rho = o\big( \frac{n}{k} \big)$ (see the discussion following \eqref{eq:noise}) to $\rho=(n/k)^{1-\Omega(1)}$.  These additional restrictions only rule out scaling regimes that are very close to linear (e.g., $k = \frac{n}{\log n}$), and were similarly imposed in \cite{Ven19}.

\begin{theorem} \label{thm:rho_main_theorem}
Let $\mathcal{S}$ be a (defective) subset of $\{1,\dots,n\}$ of cardinality $k=O\big(n^{1-\epsilon_1}\big)$ for some $\epsilon_1\in(0,1]$ and the test size constraint be $\rho=O\big((n/k)^{1-\epsilon_2}\big)$ for some $\epsilon_2\in(0,1]$. For any $\zeta > 0$, there exist choices of $C,C',N=O(1)$ such that with $O\big(n/\rho\big)$ tests, the preceding algorithm satisfies the following with probability $1 - O\big(n^{-\zeta}\big)$:
\begin{itemize}
    \item The returned estimate $\widehat{\mathcal{S}}$ equals $\mathcal{S}$;
    \item The decoding time is $O\big(n/\rho\big)$.
\end{itemize}
\end{theorem}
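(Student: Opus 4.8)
The plan is to handle the three claims—number of tests, correctness ($\widehat{\mathcal{S}}=\mathcal{S}$), and decoding time—separately. The number of tests is immediate by summation: level $l=0$ uses $n/\rho$ individual tests, each of the $C-1$ intermediate levels uses $N$ matrices of $n/\rho$ rows, and the final level uses $C'$ matrices of $n/\rho$ rows, for a total of $(1+(C-1)N+C')\cdot n/\rho = O(n/\rho)$ since $C,C',N=O(1)$. I would then record the key structural fact that there are \emph{no false negatives}: in the noiseless model any test containing a defective item is positive, so a node containing a defective passes all of its tests deterministically and propagates to the next level, and a defective singleton never lies in a negative final test. Hence $\mathcal{S}\subseteq\widehat{\mathcal{S}}$ always, and the only possible error is retaining a non-defective singleton.

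The core quantity is the probability that a fixed non-defective node passes one test. Since a level-$l$ test contains $\rho^{l/C}$ nodes, there are at most $k$ defective nodes among the $n/\rho^{1-l/C}$ nodes at that level, and the column-weight-one sampling places test-mates almost uniformly, a union bound gives $p_0:=\mathbb{P}[\text{a non-defective node's test is positive}]\le(\rho^{l/C}-1)\cdot\frac{k}{n/\rho^{1-l/C}-1}=O(k\rho/n)$, which by $\rho=O((n/k)^{1-\epsilon_2})$ equals $O((k/n)^{\epsilon_2})$, polynomially small in $n$. The level-$0$ individual tests eliminate every all-non-defective node, so $|\mathcal{PD}^{(1)}|\le k\rho^{1/C}$ deterministically. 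Writing $F_l$ for the number of non-defective nodes in $\mathcal{PD}^{(l)}$ and using that survival requires all $N$ independent iterations to be positive, I would set up $\mathbb{E}[F_{l+1}\mid\mathcal{PD}^{(l)}]\le k\rho^{1/C}+\rho^{1/C}p_0^N F_l$; choosing the constant $N$ so that $\rho^{1/C}p_0^N\le\tfrac12$ (possible since $\rho^{1/C}p_0^N\le(n/k)^{(1-\epsilon_2)/C-\epsilon_2 N}\to 0$ once $N>\tfrac{1-\epsilon_2}{C\epsilon_2}$) keeps $\mathbb{E}[F_l]=O(k\rho^{1/C})$ at every level. Correctness then follows from a first-moment bound: a non-defective singleton is retained only if it lies in $\mathcal{PD}^{(C)}$ (depending on $\mathsf{X}_0,\dots,\mathsf{X}_{C-1}$) \emph{and} survives all $C'$ final tests (an independent event depending on $\mathsf{X}_C$), so $\mathbb{E}[\#\text{false positives}]=\mathbb{E}[F_C]\,p_0^{C'}=O(k\rho^{1/C})\,p_0^{C'}$; using $k=O(n^{1-\epsilon_1})$ this is $O(n^{-\zeta})$ once $C'\ge(2+\zeta)/(\epsilon_1\epsilon_2)$, and Markov's inequality yields $\mathbb{P}[\widehat{\mathcal{S}}\ne\mathcal{S}]=O(n^{-\zeta})$.

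For the decoding time, I would note that the decoder touches each node of $\mathcal{PD}^{(l)}$ a constant number of times, so the running time is $O(n/\rho+\sum_{l\ge1}|\mathcal{PD}^{(l)}|)$, and it remains to show $\sum_l|\mathcal{PD}^{(l)}|=O(n/\rho)$ with probability $1-O(n^{-\zeta})$. The constant $C$ is taken large enough that $k\rho^{1/C}=O(n/\rho)$, i.e.\ $\rho^{1+1/C}\le n/k$, which holds once $C\ge 1/\epsilon_2-1$; together with the expectation bound this already gives $\mathbb{E}[\sum_l|\mathcal{PD}^{(l)}|]=O(n/\rho)$. The obstacle is that Markov's inequality on this expectation only delivers failure probability $O(n^{-\epsilon_1\epsilon_2})$, which is inadequate when $\zeta>\epsilon_1\epsilon_2$, while the deterministic bound $|\mathcal{PD}^{(l)}|\le k\rho^{l/C}$ can exceed $n/\rho$ at the deeper levels when $\epsilon_2<1/2$.

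I expect the main difficulty to be upgrading the expected survivor bound to a high-probability one. My plan is to condition, at each level, on the placement of the (at most $k$) defective nodes in each of the $N$ iterations, which fixes the positive tests and is independent across iterations; given this, the non-defective nodes fill the remaining slots as a capacity-constrained balls-into-bins allocation, so the per-node survival indicators are negatively associated. Negative association gives a Poisson/Chernoff tail, yielding $S_l=O(\mathbb{E}[S_l]+\mathrm{polylog}(n))$ with probability $1-n^{-\omega(1)}$; feeding this through the $C=O(1)$ levels and absorbing the accumulated $\mathrm{polylog}(n)$ factors (harmless since $C$ is constant and $\rho^{1+1/C}\,\mathrm{polylog}(n)=o(n)$) keeps $|\mathcal{PD}^{(l)}|=O(n/\rho)$ throughout. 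A union bound over the $O(1)$ levels, combined with the correctness bound, gives the overall $1-O(n^{-\zeta})$ guarantee. The delicate point is ensuring that the dependencies introduced by sharing a single random matrix within a level do not spoil the concentration—which is exactly what the column-weight-one / fixed-row-weight design is engineered to control.
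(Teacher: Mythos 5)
Your proposal follows the same skeleton as the paper's proof: the same count of $\big(1+(C-1)N+C'\big)n/\rho$ tests, the same per-test collision bound $O(k\rho/n)$ for the doubly-constant design (the paper proves this via an explicit counting argument over the constrained matrices in its Lemma~\ref{lem:rho_prob_of_nondef_node_being_in_pos_test}, where you appeal to ``almost uniform'' placement of test-mates---worth making rigorous, but the bound is correct), the same level-by-level induction keeping $|\mathcal{PD}^{(l)}|=O\big(k\rho^{1/C}\big)$, and essentially the same first-moment/union-bound treatment of the final level. The one place you genuinely diverge is the step you yourself flag as the main difficulty: upgrading the expected survivor count to a high-probability bound. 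You propose conditioning on the defective placements and invoking negative association of the survival indicators under the balanced allocation, then applying a Chernoff tail. This is plausible (occupancy indicators under permutation-type distributions are NA, and taking products across independent iterations preserves this), but it is the only unproven and technically delicate ingredient of your plan. The paper sidesteps it entirely with an elementary move you missed: instead of stopping at $N$ just large enough that $\rho^{1/C}p_0^N\le\tfrac12$, choose $N$ so large that $\rho^{1/C}(k\rho/n)^N=O(n^{-\zeta})$, which is possible since $(k\rho/n)^N\le(k/n)^{\epsilon_2 N}=n^{-\Omega(N)}$ (this is the paper's Lemma~\ref{lem:asymp}). Then the expected number of non-defective nodes of $\mathcal{PD}^{(l)}$ surviving all $N$ tests is $O\big(k\rho^{1/C}(k\rho/n)^N\big)=k\cdot O(n^{-\zeta})$, and Markov's inequality applied at threshold $k$---not at threshold $1$ or $n/\rho$ as in your calculation---gives failure probability $O(n^{-\zeta})$ per level while still capping $|\mathcal{PD}^{(l+1)}|$ at $2k\rho^{1/C}$; a union bound over the $C=O(1)$ levels finishes the argument with nothing beyond Markov. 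In short, your ``obstacle'' is self-inflicted (Markov applied to the wrong random variable at the wrong threshold), and the remedy is a larger constant $N$ rather than a negative-association/Chernoff machinery; your route can be made to work, but it buys nothing here and carries the burden of actually proving the NA claim.
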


The proof of Theorem \ref{thm:rho_main_theorem} is given in Appendix \ref{sec:rho_algo_analysis}, and follows similar ideas to that of Theorem \ref{thm:gamma_main_theorem} but with suitably modified details. As summarized in Table \ref{tab:sparse_algo_summary}, this is the first algorithm to attain $O\big(n/\rho\big)$ scaling in both the number of tests and the decoding time.

\section{Algorithm for the Noisy Setting} \label{sec:noisy_algo_intro}


For the unconstrained noisy setting, we revert to {\em binary} splitting (see Figure \ref{fig:noisy_algo_diag}), as was used in \cite{cher20,Eri20}, though in Appendix \ref{sec:non_binary} we also outline a non-binary approach that follows one used for the heavy hitters problem \cite{Cor08,Ind11}.  The main difference between our noisy algorithm and \cite{cher20,Eri20} is that when deciding whether a given node is defective or not, we look {\em several levels further down the tree}, instead of only considering the single test outcome of the given node.  This complicates the analysis, and leads to a small increase in the decoding time.  Additionally, in order to reduce the effective noise level, each node in the tree is placed in multiple tests, rather than just one.

\subsection{Description of the Algorithm} \label{sec:noisy_algo_descrip}

\begin{figure}[t]
    \centering
    \includegraphics[scale=0.4]{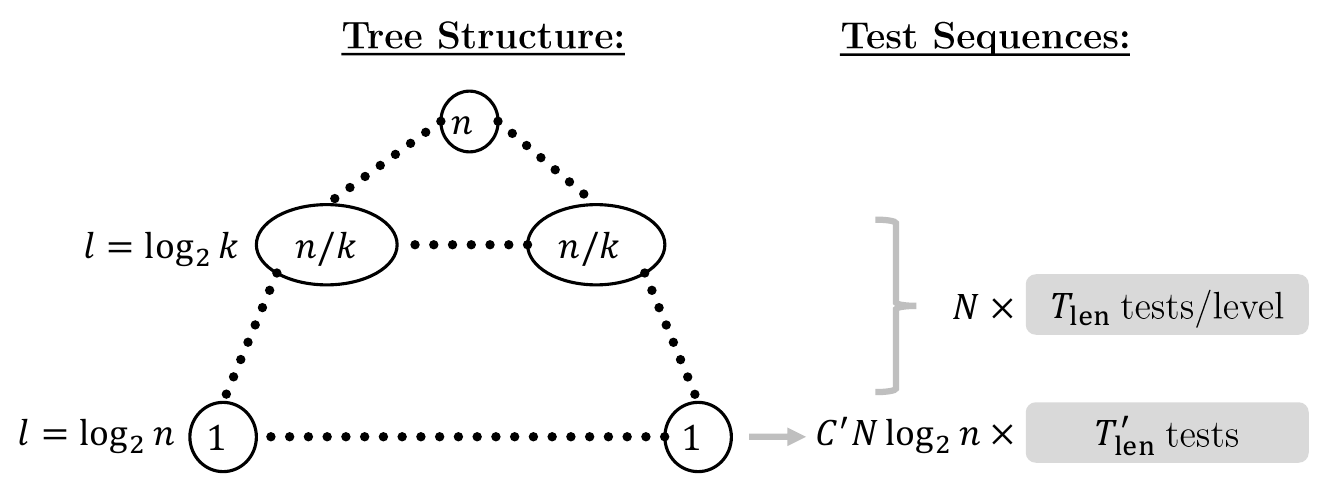}
    \caption{Tree structure of our algorithm for the noisy setting.} 
    \label{fig:noisy_algo_diag}
\end{figure}

Following \cite{cher20,Eri20}, our algorithm considers a tree representation (see Figure \ref{fig:noisy_algo_diag}), in which each node corresponds to a set of items. The levels of the tree are indexed by $l=\log_2k,\dots,\log_2n$ and the $j$-th node at the $l$-th level is denoted by $\mathcal{G}_j^{(l)}\subseteq\{1,\dots,n\}$.  At the top level we have $|\mathcal{G}_j^{(\log_2 k)}| = \frac{n}{k}$, and the sizes are subsequently halved until the final level with $|\mathcal{G}_j^{(\log_2 n)}| =1$.

The algorithm works down the tree one level at a time, keeping a list of \textit{possibly defective} ($\mathcal{PD}$) nodes, and performing tests to obtain such a list at the next level. When we perform tests at a given level, we treat each node as a ``super-item''; including a node in a test amounts to including all of the items in the corresponding node $\mathcal{G}_j^{(l)}$. In addition, for the tree illustrated in Figure \ref{fig:noisy_algo_diag}, we refer to nodes containing at least one defective item as defective nodes, to all other nodes as non-defective nodes, and to the sub-tree of defective nodes as the defective tree.

The testing is performed as follows: At each level of the tree, $N$ sequences of tests are formed, each having length $T_{\text{len}}$ (i.e., a total of $NT_{\rm len}$ tests per level).  For each node and each of the $N$ sequences, the node is placed into a single test, chosen uniformly at random among the $T_{\text{len}}$ tests.

We define the \textit{intermediate label} and \textit{final label} of a given node as follows:
\begin{itemize}
    \item The intermediate is formed via majority voting of the $N$ tests the node is included in.
    \item To obtain the final label of a given node, we look at the intermediate labels of all nodes up to $r$ levels below the given node. If there exists any length-$r$ path below the given node with more than $r/2$ positive intermediate labels, then we assign the node's final label to be positive. Otherwise, we assign it to be negative. 
\end{itemize}
According to the tree structure in Figure \ref{fig:noisy_algo_diag}, once we reach the later levels, there may be fewer than $r$ levels remaining.  To account for such cases, we simply ensure that sufficiently many tests are performed at the final level so that a length-$r$ ``path'' can be formed (here, no further branching is done, and each ``node'' is the same singleton).

With the above notation and terminology in place, the overall test design is described in Algorithm \ref{alg:noisy_nonadap_testing}, and the decoding procedure in Algorithm \ref{alg:noisy_nonadap_decoding}.


\begin{algorithm}[!t]
    \begin{algorithmic}[1]
        \REQUIRE Number of items $n$, number of defective items $k$, and parameters $N$, $C$, and $C'$
        \STATE Initialize $T_{\text{len}}=Ck$
        \FOR{each $l=\log_2k,\dots,\log_2n-1$}{
            \FOR{each iteration in $\{1,\dots,N\}$}{
                \STATE Form a sequence of tests of length $T_{\text{len}}$
                \FOR{$j=1,2,\dots,2^l$}{
                    \STATE Place all items from $\mathcal{G}_j^{(l)}$ into a single test in the sequence just formed, chosen uniformly at random.
                }\ENDFOR
            }\ENDFOR
        }\ENDFOR
        \STATE At level $l=\log_2n$, form $C'N\log_2n$ sequences of tests, each of length $T_{\text{len}}$.
        \FOR{each singleton at the final level}{
            \FOR{each of the $C'N\log_2n$ test sequences}{
                \STATE Place the singleton in one of the $T_{\text{len}}$ tests in the sequence, chosen uniformly at random.
            }\ENDFOR
        }\ENDFOR
    \end{algorithmic}
    \caption{Testing procedure for the noisy setting \label{alg:noisy_nonadap_testing}}
\end{algorithm}

\begin{algorithm}[!t]
    \begin{algorithmic}[1]
        \REQUIRE Outcomes of $T$ non-adaptive tests, number of items $n$, number of defective items $k$, and parameters $N$, $C$, $C'$, and $r$
        \STATE Initialize $\mathcal{PD}^{(l_{\text{min}})}=\big\{\mathcal{G}_j^{(l_{\text{min}})}\big\}_{j=1}^{k}$, where $l_{\text{min}}=\log_2k$.
        \FOR{$l=\log_2k,\dots,\log_2n-1$}{
            \IF{$l\leq\log_2n-r$ (i.e., there are at least $r$ levels below the node)}
            \FOR{each group $\mathcal{G}\in\mathcal{PD}^{(l)}$}{
                \STATE Evaluate the intermediate labels of all nodes $r$ levels below $\mathcal{G}$.
            }\ENDFOR
            \ELSIF{$l>\log_2n-r$ (i.e., there are fewer than $r$ levels below the node)}
            \FOR{each group $\mathcal{G}\in\mathcal{PD}^{(l)}$}{
                \STATE Evaluate the intermediate labels of all nodes all levels below $\mathcal{G}$ except the final level.
                \FOR{each node reached at the final level}{
                    \STATE Iterate through the $C'N\log_2n$ test outcomes in batches of size $N$: Conduct a majority vote for each batch to obtain an intermediate label for the node. \label{op:sub_alg_final_lvl}
                }\ENDFOR
                \STATE Use intermediate labels from each node in the final level to make up paths of length $r$ (see Section \ref{sec:noisy_algo_descrip}).
            }\ENDFOR
            \ENDIF
            \FOR{each group $\mathcal{G}\in\mathcal{PD}^{(l)}$}{
                \STATE If $\exists$ a path with more than $r/2$ positive intermediate labels, then assign $\mathcal{G}$'s final label to be positive. Otherwise, assign $\mathcal{G}$'s final label to be negative.
                \STATE If the final label of $\mathcal{G}$ is positive, then add both children of $\mathcal{G}$ to $\mathcal{PD}^{(l+1)}$.
            }\ENDFOR
        }\ENDFOR
        \STATE At the final level, for each node (singleton), repeat step \ref{op:sub_alg_final_lvl} to obtain $C'\log_2n$ intermediate labels for the node, and conduct a majority vote for the node's intermediate labels to obtain its final label. 
        \STATE Return $\widehat{\mathcal{S}}$ containing the elements of singletons in $\mathcal{PD}^{(\log_2n)}$ with a positive final label.
    \end{algorithmic}
    \caption{Decoding procedure for the noisy setting \label{alg:noisy_nonadap_decoding}}
\end{algorithm}

\subsection{Algorithmic Guarantees} \label{sec:noisy_algo_guaran}

\begin{theorem} \label{thm:noisy_main_theorem}
    Let $\mathcal{S}$ be a (defective) subset of $\{1,\dots,n\}$ of cardinality $k=o(n)$. For any constants $\epsilon > 0$ and $t > 0$ satisfying $\epsilon t>1$, there exist choices of $C,C',N = O(1)$ and $r = O(\log k + \log\log n)$ such that with $O\big(k\log n\big)$ tests, the preceding algorithm satisfies the following with probability at least $1-O\big(\big(k\log\frac{n}{k}\big)^{1-\epsilon t}\big)$:
    \begin{itemize}
        \item The returned estimate $\widehat{\mathcal{S}}$ equals $\mathcal{S}$;
        \item The decoding time is $O\big(\big(k\log\frac{n}{k}\big)^{1+\epsilon}\big)$.
    \end{itemize}
\end{theorem}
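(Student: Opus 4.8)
The plan is to follow the $\mathcal{PD}$ list down the tree and control two error types per node: a \emph{false negative}, where a defective node receives a negative final label (so its defective descendants are lost), and a \emph{false positive}, where a non-defective node receives a positive final label (which governs both correctness and decoding time). Write $m := k\log\frac{n}{k}$; since there are at most $k$ defective nodes at each of the $O(\log\frac{n}{k})$ levels, the defective tree has $O(m)$ nodes. The test count is immediate: each level uses $N T_{\text{len}} = NCk$ tests and the final level uses $C'N\log_2 n\cdot Ck$ tests, for a total of $O(k\log n)$ since $N,C,C'=O(1)$.

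The heart of the argument is a per-node label analysis. For the \emph{intermediate} labels, fix a node and one of the $N$ sequences: each of the at most $k$ defective nodes at that level lands in this node's test independently with probability $1/T_{\text{len}}=1/(Ck)$, so a union bound shows a non-defective node's test is truly positive with probability at most $1/C$, whence its observed outcome is positive with probability at most $\pi_-:=1/C+p$; choosing $C>\frac{2}{1-2p}$ gives $\pi_-<1/2$. A defective node's test is always truly positive, giving observed-positive probability $\pi_+:=1-p>1/2$. Because the $N$ sequences are independent, the majority-vote intermediate label is positive with probability $1-q$ (defective) or at most $q'$ (non-defective), where $q,q'\to0$ as $N$ grows. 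For the \emph{final} labels I would then exploit that intermediate labels at different levels use independent test randomness. For a defective node, following a fixed defective item gives an $r$-level path of defective nodes with independent labels, so a Chernoff bound makes this path lack a positive majority with probability at most $\beta_+^r$, $\beta_+=2\sqrt{q(1-q)}<1$; hence the false-negative probability per defective node is at most $\beta_+^r$. For a non-defective node, \emph{every} descendant is non-defective, so each of the $2^r$ length-$r$ paths below it carries independent labels that are positive with probability at most $q'$; a single path has a positive majority with probability at most $\big(2\sqrt{q'(1-q')}\big)^r$, and a union bound over the $2^r$ paths yields false-positive probability at most $\beta_-^r$ with $\beta_-=4\sqrt{q'(1-q')}$. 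The crucial point is that $\beta_-$ can be driven below any target by taking $q'$ small (i.e.\ $C,N$ large), absorbing the factor-$2^r$ blow-up. I would set $r=\epsilon\log_2 m$ (so $2^r=m^\epsilon$ and $r=O(\log k+\log\log n)$) and pick $q,q'$ so that $\beta_+,\beta_-\le 2^{-t}$, making both per-node error probabilities at most $\delta:=m^{-\epsilon t}$. Levels with fewer than $r$ levels remaining are handled by the extra $C'N\log_2 n$ final-level sequences, which supply many independent intermediate labels per singleton to complete length-$r$ paths; independence across batches keeps the analysis intact.

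Assembling the global guarantees, correctness fails only if (i) one of the $O(m)$ defective nodes is a false negative, contributing $O(m)\cdot\delta=O(m^{1-\epsilon t})$; (ii) some non-defective node at a non-final level is expanded; or (iii) the robust final-level vote (over $C'\log_2 n$ independent batches) misclassifies a singleton, each failing with probability $n^{-\Omega(C')}$, so a union bound over $\le n$ singletons gives $n^{1-\Omega(C')}=o(m^{1-\epsilon t})$ for $C'$ large. For (ii), and simultaneously for decoding time, I would control the non-defective part of $\mathcal{PD}$ by a subcritical branching bound: the non-defective nodes entering $\mathcal{PD}$ are seeded by the $O(m)$ non-defective children of defective nodes, and each non-defective node in $\mathcal{PD}$ is expanded with probability at most $\delta$, adding two children. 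Writing $V_l,E_l$ for the number of non-defective nodes in $\mathcal{PD}^{(l)}$ and the number of them expanded, linearity gives $\E[V_{l+1}]\le 2k+2\delta\,\E[V_l]$, hence $\E[V_l]=O(k)$ and $\E\big[\sum_l E_l\big]\le\delta\sum_l\E[V_l]=O(m\delta)=O(m^{1-\epsilon t})$. By Markov's inequality the total number of expansions is $0$ with probability $1-O(m^{1-\epsilon t})$; on this event no non-defective node propagates (settling (ii)), and $\sum_l|\mathcal{PD}^{(l)}|=O(m)$. Since computing a node's final label examines the $O(2^r)=O(m^\epsilon)$ nodes of its depth-$r$ sub-tree, the decoding time is $O(m\cdot m^\epsilon)=O(m^{1+\epsilon})$, the final level adding only $O(k\log n)=O(m^{1+\epsilon})$.

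I expect the main obstacle to be the false-positive analysis for non-defective nodes: unlike the noiseless splitting algorithm, which inspects a single outcome, here one looks $r$ levels down and must union bound over $2^r$ paths, so the proof closes only because the per-path majority probability decays like $\beta_-^r$ with $\beta_-$ tunable below $1$. Establishing this rests on the collision bound $\pi_-<1/2$, on taking $C,N$ large enough, and on the genuine independence of intermediate labels across levels along each path. The coupled control of correctness and decoding time through the single subcritical branching bound is the other delicate point.
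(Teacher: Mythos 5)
Your proposal follows essentially the same route as the paper's proof: the same per-node analysis of intermediate labels via collision probability $p+1/C$ and Hoeffding/majority voting, the same $\binom{r}{r/2}q^{r/2}$-type bound along a single defective path for false negatives, the same union bound over $2^r$ paths absorbed by driving the per-label error small enough that the effective rate is $\le 2^{-t}$ per level, the same choice $r \approx \epsilon\log_2\big(k\log\frac{n}{k}\big)$, the same padded final-level majority over $C'\log_2 n$ batches, and the same accounting of tests and of decoding time as (number of $\mathcal{PD}$ nodes) $\times$ $O(2^r)$.

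The one place you genuinely deviate is the ``subcritical branching'' control of non-defective expansions, and that step as written has a gap. The inequality $\E[E_l]\le\delta\,\E[V_l]$ requires $\mathbb{P}\big[F_v=+\,\big|\,v\in\mathcal{PD}^{(l)}\big]\le\delta$ for a non-defective node $v$, but the event $v\in\mathcal{PD}^{(l)}$ is determined by the final labels of $v$'s ancestors, and the parent's final label is computed from intermediate labels at levels $l,\dots,l-1+r$ --- a window that overlaps the window $l+1,\dots,l+r$ used for $v$'s own final label. Conditioning on the parent being declared positive therefore positively biases exactly the intermediate labels that decide $F_v$, so the unconditional bound $\delta$ does not automatically transfer. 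The paper avoids this entirely: it takes a single \emph{unconditional} union bound over the three fixed families of nodes (defective nodes, their non-defective children, and non-defective nodes at $l_{\min}$), of which there are at most $2k\log_2\frac{n}{k}+k$, each mislabeled with probability $\le 2^{-tr}$; on the complement no non-defective node is ever expanded, so $|\mathcal{PD}^{(l)}|\le 2k$ deterministically and the recursion is unnecessary. Your argument is repaired by exactly this substitution (your own observation that the non-defective part of $\mathcal{PD}$ is ``seeded'' only by children of defective nodes already contains the needed idea), after which the rest of your correctness and decoding-time accounting goes through as in the paper.
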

The proof of Theorem \ref{thm:noisy_main_theorem} is given in Appendix \ref{sec:noisy_algo_analysis}.  The main distinction compared to the noiseless proofs is that we need to bound the probabilities of intermediate labels (used in Lines 5 and 8 of Algorithm \ref{alg:noisy_nonadap_decoding}) and final labels (computed in Line 13) being wrong, to ensure that correct decisions are made at each level.  The $N$ independent repetitions at each level play the role of reducing the former, and the independence of tests across levels helps to tightly characterize the latter.

\section{Conclusion} \label{sec:conclusion}

We have provided fast splitting algorithms for sparsity-constrained and noisy group testing, maintaining the near-optimal number of tests provided by earlier works while also attaining a matching or near-matching decoding time.  Possible directions for future research include (i) in the finitely divisible setting, match the number of tests used by the DD algorithm (see Table \ref{tab:sparse_algo_summary}) with sublinear decoding time, and (ii) in the noisy setting, further reduce the $\big( k \log \frac{n}{k} \big)^{1+\epsilon}$ runtime, ideally bringing it all the way down to $O(k \log n)$.  


\appendix

\section*{\huge Appendix}

\section{Proof of Theorem \ref{thm:gamma_main_theorem} (Finitely Divisible Items)} \label{sec:gamma_algo_analysis}
Throughout the analysis, the defective set $\mathcal{S}$ is fixed but otherwise arbitrary, and we condition on fixed placements of the defective items into tests (and hence, fixed test outcomes and a fixed defective tree). The test placements of the non-defective items are independent of those of the defective items, and our analysis will hold regardless of which particular tests the defectives were placed in. The defective test placements are written as $\mathcal{T}_{\mathcal{S}}$, and we write $\mathbb{P}[\cdot \,|\,  \mathcal{T}_{\mathcal{S}}]$ to denote the conditioning.

We proceed with three lemmas that follow analogous steps to \cite{Eri20}.  At level $l=1$, the probability of a non-defective node being placed in a positive test is zero, because each node is placed in its own individual test. As for levels $l \in \{2,\dotsc,\gamma'-2\}$, we proceed with the following simple lemma.

\begin{lemma} \label{lem:gamma_prob_of_nondef_node_being_in_pos_test} {\textup{(Probabilities of Non-Defectives Being in Positive Tests)}}
Under the above test design, the following holds at any given level $l=2,\dots,\gamma'-2$: Conditioned on any defective test placements $\mathcal{T}_{\mathcal{S}}$, any given non-defective node at level $l$ has probability at most $(1/C)(n/k)^{-1/\gamma'}$ of being placed in a positive test.
\end{lemma}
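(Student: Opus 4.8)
The plan is to reduce the claim to a simple counting argument combined with the uniform-random placement of the non-defective node. First I would note that in the noiseless model of \eqref{eq:test_outcome_formula}, a test at level $l$ is positive if and only if it contains at least one defective item; equivalently, it is positive precisely when it contains at least one \emph{defective node} (a node holding at least one element of $\mathcal{S}$). Hence a given non-defective node at level $l$ is placed in a positive test if and only if the single test it is assigned to already contains a defective node under the fixed placement $\mathcal{T}_{\mathcal{S}}$.

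Next I would bound the number of positive tests in the length-$T_{\text{len}}$ sequence at level $l$. Conditioned on $\mathcal{T}_{\mathcal{S}}$, each of the $k$ defective items lies in exactly one node at level $l$, and each such node is placed in exactly one test by Algorithm \ref{alg:gamma_nonadap_testing}. Therefore the set of tests containing a defective node — i.e., the set of positive tests — has size at most $k$.

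Then I would invoke the independence and uniformity of the non-defective placement. By the test design, and because the placements of the non-defective items are independent of $\mathcal{T}_{\mathcal{S}}$, the non-defective node is assigned to one of the $T_{\text{len}}=Ck(n/k)^{1/\gamma'}$ tests uniformly at random, independently of which tests are positive. Combining with the previous step, the probability that it lands in a positive test is at most
\begin{align}
    \frac{k}{T_{\text{len}}} = \frac{k}{Ck(n/k)^{1/\gamma'}} = \frac{1}{C}\Big(\frac{n}{k}\Big)^{-1/\gamma'},
\end{align}
which is exactly the claimed bound.

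There is no substantial obstacle here: the statement follows from counting positive tests and dividing by the sequence length. The only point requiring minor care is the justification that at most $k$ tests are positive, which relies on each defective item occupying a single test at the given level; this is precisely what the test design guarantees. The restriction to $l \le \gamma'-2$ simply reflects that the sequence length equals $T_{\text{len}}$ at these levels, and the identical argument applies at level $l=\gamma'-1$ with $T'_{\text{len}}$ in place of $T_{\text{len}}$.
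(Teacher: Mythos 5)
Your proof is correct and follows essentially the same argument as the paper: at most $k$ nodes at level $l$ are defective, each occupies a single test, so at most $k$ of the $T_{\text{len}}=Ck(n/k)^{1/\gamma'}$ tests are positive, and the uniform, independent placement of the non-defective node gives the bound $k/T_{\text{len}}=(1/C)(n/k)^{-1/\gamma'}$. Your closing remark about level $l=\gamma'-1$ with $T'_{\text{len}}$ also matches how the paper handles that level separately.
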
 

\begin{proof}
Since there are $k$ defective items, at most $k$ nodes at a given level can be defective. Hence, since each node is placed in a single test, at most $k$ tests out of the $Ck(n/k)^{1/\gamma'}$ tests at the given level can be positive. Since the test placements are independent and uniform, it follows that for any non-defective node, the probability of being in a positive test is at most $k/T_{\text{len}}=k/\big(Ck(n/k)^{1/\gamma'}\big)=(1/C)(n/k)^{-1/\gamma'}$.
\end{proof}

In view of this lemma, when starting at any non-defective child of any defective node, we can view any further branches down the non-defective sub-tree as ``continuing'' (i.e., the $M^{1/(\gamma'-1)}$ children are marked as possibility defective) with probability at most $(1/C)(n/k)^{-1/\gamma'}$, in particular implying Lemma \ref{lem:gamma_prob_of_nondef_node_at_dist_away} below.  
Before stating the lemma, we introduce some terminology that well help us make more concise statements:
\begin{itemize}
    \item We say that a node is \emph{reached} if all of its ancestors are placed in positive tests, so the node will be considered possibly defective.  This is in contrast to nodes that are not reached (by the decoding algorithm) because one of their ancestors is found to be non-defective.
    \item For any non-defective node, we define its \emph{distance to the defective tree} as the smallest number of edges that needs to be traversed to reach a defective node (e.g., $\Delta = 1$ for a non-defective child of a defective node).
\end{itemize}

\begin{lemma} \label{lem:gamma_prob_of_nondef_node_at_dist_away} \textup{(Probability of Reaching a Non-Defective Node)}
    Under the setup of Lemma \ref{lem:gamma_prob_of_nondef_node_being_in_pos_test}, any given non-defective node at distance $\Delta$ from the defective tree is reached with probability at most $(1/C)^{\Delta-1}(n/k)^{(1-\Delta)/\gamma'}$.
\end{lemma}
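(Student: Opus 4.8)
The plan is to express the event that a non-defective node $v$ at distance $\Delta$ from the defective tree is reached as a conjunction of $\Delta-1$ elementary ``positive placement'' events, one for each non-defective ancestor of $v$, and then bound each factor using Lemma \ref{lem:gamma_prob_of_nondef_node_being_in_pos_test}. Recall that in the decoding procedure a node enters $\mathcal{PD}^{(l)}$ only if its parent lies in $\mathcal{PD}^{(l-1)}$ and is placed in a positive test. Unrolling this recursion, $v$ is reached precisely when every ancestor of $v$ lying strictly between $v$ and the defective tree is placed in a positive test. Since $v$ is at distance $\Delta$, its nearest defective ancestor sits $\Delta$ levels above it, and that defective ancestor is automatically in a positive test; hence the only probabilistic requirements come from the $\Delta-1$ non-defective ancestors, occupying the $\Delta-1$ consecutive levels immediately below the defective ancestor.

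First I would argue that these $\Delta-1$ events are mutually independent once we condition on $\mathcal{T}_{\mathcal{S}}$. Conditioning on $\mathcal{T}_{\mathcal{S}}$ fixes the set of positive tests at every level, since a test is positive exactly when it contains a defective item, which is determined by the defective placements alone. The relevant ancestors lie at distinct levels, and the testing procedure places each node into its test uniformly and independently across nodes and across levels (and independently of the defective placements); therefore whether each ancestor lands in one of the now-fixed positive tests at its level is an independent event. Applying Lemma \ref{lem:gamma_prob_of_nondef_node_being_in_pos_test} to each factor, every such non-defective ancestor is in a positive test with probability at most $(1/C)(n/k)^{-1/\gamma'}$, so multiplying the $\Delta-1$ independent bounds gives
\begin{align}
    \mathbb{P}[\,v\text{ reached}\,|\,\mathcal{T}_{\mathcal{S}}]\le\bigg(\frac{1}{C}\Big(\frac{n}{k}\Big)^{-1/\gamma'}\bigg)^{\Delta-1}=\Big(\frac{1}{C}\Big)^{\Delta-1}\Big(\frac{n}{k}\Big)^{(1-\Delta)/\gamma'},
\end{align}
which is exactly the claimed bound.

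The main point to handle carefully is the boundary behavior at the extremal levels. At level $l=1$ each node is tested individually, so a non-defective level-$1$ node is never in a positive test; this only strengthens the conclusion, since such a node can never be a non-defective ancestor of a reached node, and so the ancestors that actually matter genuinely lie in the regime $l\in\{2,\dots,\gamma'-2\}$ covered by Lemma \ref{lem:gamma_prob_of_nondef_node_being_in_pos_test}. I would also verify that the ancestors under consideration do not stray into the modified split levels (where the test-sequence lengths $T'_{\text{len}}$, $T''_{\text{len}}$ differ), treating those separately if needed. These checks are routine; the crux of the argument is the reduction to a product of independent per-level events together with the per-level bound from Lemma \ref{lem:gamma_prob_of_nondef_node_being_in_pos_test}.
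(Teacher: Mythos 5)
Your proof is correct and matches the paper's intended argument: the paper states this lemma as an immediate consequence of Lemma \ref{lem:gamma_prob_of_nondef_node_being_in_pos_test}, precisely via the observation that reaching a node at distance $\Delta$ requires each of its $\Delta-1$ non-defective ancestors to land in a positive test, these placements being independent across levels conditioned on $\mathcal{T}_{\mathcal{S}}$. Your additional checks (level-$1$ ancestors are never in positive tests due to individual testing, and the relevant ancestors all lie in levels covered by Lemma \ref{lem:gamma_prob_of_nondef_node_being_in_pos_test}) are accurate and only reinforce the bound.
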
 

We will use the preceding lemmas to control the quantity $N_{\text{total}}$, defined to be the total number of non-defective nodes that are \textit{reached}---in the sense of Lemma \ref{lem:gamma_prob_of_nondef_node_at_dist_away}---among levels $l\in\{2,\dots,\gamma'-1\}$. It will be useful to upper bound $N_{\text{total}}$ for the purpose of controlling the overall decoding time and the number of items considered at the final level.

\subsection{Bounding $N_{\text{total}}$} 

We first present a lemma bounding the average of $N_{\text{total}}$.

\begin{lemma} \label{lem:N_total_average_bound_gamma} \textup{(Bounding $N_{\text{total}}$ on Average)}
For any parameters $C>1$ and $\gamma'>1$, and any defective test placements $\mathcal{T}_{\mathcal{S}}$, under the choice $M=(n/k)^{\frac{\gamma'-1}{\gamma'}}$, we have
\begin{align}
    \E[N_{\textup{total}}|\mathcal{T}_{\mathcal{S}}]=O\bigg(\gamma'k\Big(\frac{n}{k}\Big)^{\frac{1}{\gamma'}}\bigg).
\end{align}
\end{lemma}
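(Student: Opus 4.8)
The plan is to bound $\E[N_{\text{total}} \mid \mathcal{T}_{\mathcal{S}}]$ by linearity of expectation, writing it as a sum over all non-defective nodes at levels $l \in \{2,\dots,\gamma'-1\}$ of the probability that each is reached, and then organizing this sum according to each node's distance $\Delta$ from the defective tree so that Lemma \ref{lem:gamma_prob_of_nondef_node_at_dist_away} can be applied directly. The first step is a counting argument: for a fixed level $l$ and a fixed distance $\Delta$, I would count how many non-defective nodes lie at distance $\Delta$ from the defective tree. Such a node must descend from a non-defective child (at level $l-\Delta+1$) of a defective node at level $l-\Delta$. Since at most $k$ nodes at any level can be defective (there are only $k$ defective items), and each defective node has $M^{1/(\gamma'-1)}$ children, there are at most $k\,M^{1/(\gamma'-1)}$ such distance-one nodes; each of these in turn has $M^{(\Delta-1)/(\gamma'-1)}$ descendants $\Delta-1$ levels further down, giving at most $k\,M^{\Delta/(\gamma'-1)}$ non-defective nodes at distance $\Delta$ at level $l$.

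Next I would combine this count with the reach probability $(1/C)^{\Delta-1}(n/k)^{(1-\Delta)/\gamma'}$ from Lemma \ref{lem:gamma_prob_of_nondef_node_at_dist_away} and substitute $M=(n/k)^{(\gamma'-1)/\gamma'}$, which makes $M^{\Delta/(\gamma'-1)} = (n/k)^{\Delta/\gamma'}$. The key simplification is that the factors $(n/k)^{\Delta/\gamma'}$ and $(n/k)^{(1-\Delta)/\gamma'}$ collapse, leaving the contribution of each distance-$\Delta$, level-$l$ group equal to $k\,(n/k)^{1/\gamma'}(1/C)^{\Delta-1}$, with the dependence on $\Delta$ surviving only through the geometric factor $(1/C)^{\Delta-1}$.

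Finally I would sum over the two indices. Summing over $\Delta \geq 1$ gives a convergent geometric series (since $C>1$), contributing a factor $\frac{C}{C-1} = O(1)$; summing the resulting $O\big(k(n/k)^{1/\gamma'}\big)$ bound over the $O(\gamma')$ levels $l \in \{2,\dots,\gamma'-1\}$ yields $\E[N_{\text{total}} \mid \mathcal{T}_{\mathcal{S}}] = O\big(\gamma' k (n/k)^{1/\gamma'}\big)$, as claimed.

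I expect the main obstacle to be the counting step: correctly identifying which nodes lie at distance $\Delta$ and verifying that bounding the number of non-defective children per defective node by the full branching factor $M^{1/(\gamma'-1)}$ is legitimate and does not double-count across different defective ancestors. The rest is a routine substitution followed by a geometric-series summation, and the clean cancellation of the $(n/k)$ exponents is precisely what the choice $M=(n/k)^{(\gamma'-1)/\gamma'}$ is engineered to produce.
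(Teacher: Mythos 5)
Your proposal is correct and follows essentially the same route as the paper's proof: count the non-defective nodes at distance $\Delta$ via the branching factor, multiply by the reach probability from Lemma \ref{lem:gamma_prob_of_nondef_node_at_dist_away}, observe that the choice of $M$ makes the $(n/k)$ exponents cancel, and sum the resulting geometric series in $\Delta$ (the paper aggregates the $O(\gamma' k)$ defective nodes across all levels into one sum over $\Delta$, while you sum over levels and distances separately, which is an immaterial bookkeeping difference). The only cosmetic discrepancy is that the paper also adds a trivial $n/M = k(n/k)^{1/\gamma'}$ term for level $l=1$, which is absorbed by the final bound either way.
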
 

\begin{proof}
At level $l=1$, we use $n/M$ tests for individual nodes. This results in correct identification of the non-defective nodes, guaranteeing that they will not ``continue'' to branch. Hence, at level $l=1$, we trivially upper bound the number of non-defective nodes by $n/M$.

For the remaining levels $l=2,\dots,\gamma'-1$, all splits are $\big(M^{1/(\gamma'-1)}\big)$-ary, and each defective node can have at most $M^{\Delta/(\gamma'-1)}$ descendants at distance $\Delta$. Since there are at most $\gamma'k$ defective nodes in total among levels $l=1,\dots,\gamma'-1$, it follows that there are at most $\gamma'kM^{\frac{\Delta}{\gamma'-1}}$ non-defective nodes at distance $\Delta$ from defective nodes starting at those levels.  Furthermore, we established in Lemma \ref{lem:gamma_prob_of_nondef_node_at_dist_away} that a distance of $\Delta$ gives a probability of at most $\big(\frac{1}{C}\big)^{\Delta-1}\big(\frac{n}{k}\big)^{(1-\Delta)/\gamma'}$ of being reached. This gives
\begin{align}
    \E[N_{\textup{total}}|\mathcal{T}_{\mathcal{S}}]
    &\leq\sum_{\Delta=1}^{\gamma'}\gamma'kM^{\frac{\Delta}{\gamma'-1}}\Big(\frac{1}{C}\Big)^{\Delta-1}\Big(\frac{n}{k}\Big)^{\frac{1-\Delta}{\gamma'}}
    +\frac{n}{M} \\
    &=\gamma'kM^{\frac{1}{\gamma'-1}}\sum_{\Delta=1}^{\gamma'}M^{\frac{\Delta-1}{\gamma'-1}}\Big(\frac{1}{C}\Big)^{\Delta-1}\Big(\frac{n}{k}\Big)^{\frac{1-\Delta}{\gamma'}}+\frac{n}{M} \\
    &\stackrel{(a)}{\le}\gamma'kM^{\frac{1}{\gamma'-1}}\frac{1}{1-M^{\frac{1}{\gamma'-1}}\big(\frac{1}{C}\big)\big(\frac{n}{k}\big)^{-1/\gamma'}}+\frac{n}{M} \\
    &\stackrel{(b)}{=}\gamma'k\Big(\frac{n}{k}\Big)^{\frac{1}{\gamma'}}\frac{1}{1-1/C}+k\Big(\frac{n}{k}\Big)^{\frac{1}{\gamma'}},
\end{align}
where (a) applies the geometric series formula (increasing the upper limit of the sum from $\gamma'$ to $\infty$), and (b) follows by substituting $M=(n/k)^{\frac{\gamma'-1}{\gamma'}}$.
\end{proof}

We now wish to move from a characterization of the average to a high-probability characterization.  At this point, we depart somewhat further from the analysis of \cite{Eri20}, which is based on branching process theory, and appears to yield suboptimal results in the case that the tree's branching factor scales as $\omega(1)$.

We introduce the following definition, in which we refer to a \textit{full} $m$-ary tree as a tree where every \textit{internal} node has exactly $m$ children.
 
\begin{lemma}\label{lem:cat_num} \textup{\cite[Prop.~3.1]{AVAL08} (Fuss-Catalan Numbers)} 
For natural integers $m,n\geq2$, the order-$m$ Fuss-Catalan number
\begin{align}
    \textup{Cat}_m^{n}&=\frac{1}{(m-1)n+1}{mn\choose n}
    \leq{mn\choose n}
    \leq(em)^n,
\end{align}
is the number of full $m$-ary trees with exactly $n$ internal nodes.
\end{lemma}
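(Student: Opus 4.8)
The plan is to separate the combinatorial content from the numerical bounds, since the former is exactly the cited result and only the latter requires any work. The identity $\textup{Cat}_m^n = \frac{1}{(m-1)n+1}\binom{mn}{n}$, together with the claim that this quantity enumerates full $m$-ary trees with exactly $n$ internal nodes, is the statement of \cite[Prop.~3.1]{AVAL08}, so I would invoke that reference directly rather than reprove it. (If a self-contained argument were wanted, one would set up the Fuss--Catalan recursion for rooted $m$-ary trees by conditioning on the $m$ subtrees hanging off the root, and then verify that $\frac{1}{(m-1)n+1}\binom{mn}{n}$ satisfies that recursion, or else appeal to the cycle lemma; but this is standard and unnecessary here.)

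It then remains only to verify the two displayed inequalities. The first, $\frac{1}{(m-1)n+1}\binom{mn}{n} \leq \binom{mn}{n}$, is immediate: for $m,n \geq 2$ the prefactor $\frac{1}{(m-1)n+1}$ is positive and bounded above by $1$ (indeed $(m-1)n+1 \geq 1$ for all $m \geq 1$, $n \geq 0$), so multiplying the binomial coefficient by it can only decrease its value.

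For the second inequality, $\binom{mn}{n} \leq (em)^n$, I would apply the standard upper bound $\binom{a}{b} \leq \left(\frac{ea}{b}\right)^b$ with $a = mn$ and $b = n$, which yields $\binom{mn}{n} \leq \left(\frac{e\cdot mn}{n}\right)^n = (em)^n$, as required. The auxiliary bound $\binom{a}{b} \leq (ea/b)^b$ itself follows from $\binom{a}{b} \leq \frac{a^b}{b!}$ together with $b! \geq (b/e)^b$, the latter being a consequence of $e^b = \sum_{i \geq 0} b^i/i! \geq b^b/b!$.

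There is essentially no obstacle here: the only nontrivial ingredient is the enumeration result, which is imported wholesale from the cited proposition, and the two inequalities are routine. The one point worth stating carefully is that both bounds hold uniformly over all $m,n \geq 2$, since this is precisely the regime in which the lemma is later applied, with $m$ playing the role of a branching factor and $n$ the number of internal nodes in the relevant sub-tree.
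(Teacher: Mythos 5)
Your proposal is correct and matches the paper's treatment: the paper simply cites \cite[Prop.~3.1]{AVAL08} for the enumeration identity and offers no further argument, implicitly treating the two inequalities as routine, exactly as you do. Your explicit verification of $\frac{1}{(m-1)n+1}\le 1$ and of $\binom{mn}{n}\le(em)^n$ via $\binom{a}{b}\le(ea/b)^b$ is sound and fills in the only details the paper leaves unstated.
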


We note that the Catalan numbers also played an important role in the analysis of the unconstrained setting it \cite{cher20}, but were used in a rather different manner that we were unable to extend to obtain a result comparable to Theorem \ref{thm:gamma_main_theorem}.  In the proof of the following lemma, these are used in a counting argument in order to establish the sub-exponential behavior of the random variable $N_{\rm total}$.

\begin{lemma} \label{lem:N_total_high_prob_bound_gamma} \textup{(High Probability Bound on $N_{\text{total}}$)}
For any parameters $C\geq e^2$ and $\gamma'>1$, and any defective test placements $\mathcal{T}_{\mathcal{S}}$, under the choice $M=(n/k)^{\frac{\gamma'-1}{\gamma'}}$, we have $N_{\textup{total}}=O\big(\gamma'k(n/k)^{1/\gamma'}\big)$ with probability $1-e^{-\Omega(\gamma'k)}$.
\end{lemma}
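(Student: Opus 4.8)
The plan is to upgrade the expectation bound of Lemma \ref{lem:N_total_average_bound_gamma} to a high-probability statement by showing that $N_{\text{total}}$ has sub-exponential tails, using the Fuss-Catalan counting bound of Lemma \ref{lem:cat_num} to control the number of possible ``shapes'' that the reached non-defective sub-trees can take. The main idea is that $N_{\text{total}}$ can only be large if some large connected collection of non-defective nodes is reached, and by Lemma \ref{lem:gamma_prob_of_nondef_node_at_dist_away} each such node is reached only with a small probability that decays geometrically in its distance $\Delta$ from the defective tree. A union bound over all tree shapes of a given size, weighted by their probabilities, should show the probability of a large $N_{\text{total}}$ decays exponentially.

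Concretely, first I would fix the branching factor $m = M^{1/(\gamma'-1)} = (n/k)^{1/\gamma'}$ and observe that the reached non-defective nodes hang off the at most $\gamma' k$ defective nodes as sub-trees within the full $m$-ary tree. For a single defective node, consider the event that a reached non-defective sub-tree rooted at one of its children has exactly $s$ nodes. The probability that a specific such sub-tree (a specific connected shape) is fully reached is the product over its nodes of the per-edge ``continuation'' probability, which by Lemma \ref{lem:gamma_prob_of_nondef_node_being_in_pos_test} is at most $(1/C)(n/k)^{-1/\gamma'} = (1/C) m^{-1}$ per node; so a shape with $s$ nodes is reached with probability at most $\big((1/C) m^{-1}\big)^{s}$ roughly. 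The number of distinct connected shapes of size $s$ within an $m$-ary tree is bounded by the Fuss-Catalan number, which by Lemma \ref{lem:cat_num} is at most $(em)^{s}$. Multiplying, the expected contribution per shape-size is at most $(em)^{s} \cdot \big((1/C) m^{-1}\big)^{s} = (e/C)^{s}$, and since $C \ge e^2$ this is at most $e^{-s}$, giving a convergent geometric series over $s$ that provides both a moment generating function bound and a tail bound.

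The cleanest route is probably to bound the moment generating function $\E[e^{\lambda N_{\text{total}}} | \mathcal{T}_{\mathcal{S}}]$ for a suitable constant $\lambda > 0$: write $N_{\text{total}}$ as a sum over the (at most $\gamma' k$) defective-node roots of the sizes of the reached non-defective sub-trees, argue independence (or at least a product upper bound) of these contributions across distinct roots since non-defective placements are independent, and bound the per-root MGF by $\sum_{s \ge 0} (\text{shapes of size } s)\cdot(\text{reach prob})\cdot e^{\lambda s} \le \sum_{s \ge 0} (e^{1+\lambda}/C)^{s} = O(1)$ for $\lambda$ a small constant and $C$ large enough. Raising to the power $\gamma' k$ and applying a Chernoff bound against the threshold $\Theta\big(\gamma' k (n/k)^{1/\gamma'}\big)$ from Lemma \ref{lem:N_total_average_bound_gamma} then yields the $1 - e^{-\Omega(\gamma' k)}$ guarantee.

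The main obstacle I anticipate is making the counting and independence arguments rigorous rather than heuristic: one must carefully define what a reached sub-tree ``shape'' is, ensure the per-shape reach probability genuinely factorizes as a product over nodes (the conditioning on $\mathcal{T}_{\mathcal{S}}$ and the column-weight-one placement structure should make the relevant non-defective placements independent across distinct nodes, but this needs to be stated precisely), and handle the level-$1$ individual-testing contribution $n/M$ separately since it is deterministic and not part of the branching process. A secondary subtlety is the interaction between the finite depth $\gamma'$ of the tree and the Fuss-Catalan count, which is stated for arbitrary-size trees; truncating the sum at size at most the total number of nodes is harmless since the geometric bound already dominates, but I would note that the depth constraint only decreases the number of admissible shapes, so the $(em)^s$ bound remains valid as an upper bound.
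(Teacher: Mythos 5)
Your proposal follows essentially the same route as the paper: bound the tail of each per-defective-root reached sub-tree via a union bound over full $m$-ary tree shapes counted by the Fuss--Catalan bound $(em)^{s}$, multiply by the per-node reach probability $(1/C)(n/k)^{-1/\gamma'}$ to get a $(e/C)^{s}\le e^{-s}$ decay, and then concentrate the sum of $O(\gamma'k)$ independent sub-exponential contributions (the paper cites a standard sub-exponential Bernstein-type inequality, which is the packaged form of your MGF/Chernoff step). The one bookkeeping point your "roughly" glosses over, and which the paper handles explicitly, is that only the \emph{internal} nodes of a reached shape must lie in positive tests, so the tail bound is on the internal-node count and one must multiply by the branching factor $(n/k)^{1/\gamma'}$ at the end to convert to $N_{\text{total}}$ and match the threshold $\Theta\big(\gamma'k(n/k)^{1/\gamma'}\big)$.
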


\begin{proof}
Consider a single non-defective sub-tree following a defective node, and let $N_b$ be the number of nodes in the sub-tree such that itself and all its ancestors only appear in positive tests (i.e., the number of nodes that lead to further branching).  We have
\begin{align}
    \mathbb{P}[N_b=n_b]&\leq\mathbb{P}[\exists\text{ a full $M^{1/(\gamma'-1)}$-tree reached with $n_b$ internal nodes}] \\
    &\stackrel{(a)}{\leq}(\text{\#full $M^{1/(\gamma'-1)}$-trees with $n_b$ internal nodes})\cdot\bigg(\frac{1}{C}\Big(\frac{n}{k}\Big)^{-\frac{1}{\gamma'}}\bigg)^{n_b} \\
    &\stackrel{(b)}{\leq}\big(eM^{1/(\gamma'-1)}\big)^{n_b}\bigg(\frac{1}{C}\Big(\frac{n}{k}\Big)^{-\frac{1}{\gamma'}}\bigg)^{n_b} \\
    &\stackrel{(c)}{=}\Big(\frac{e}{C}\Big)^{n_b}
    \stackrel{(d)}{\leq}e^{-n_b},
\end{align}
where (a) applies Lemma \ref{lem:gamma_prob_of_nondef_node_being_in_pos_test} and the union bound, (b) applies Lemma \ref{lem:cat_num}, (c) is obtained by substituting $M=(n/k)^{\frac{\gamma'-1}{\gamma'}}$ and simplifying, and (d) holds since $C\geq e^2$. This implies that $N_b$ is a sub-exponential random variable. Since we have at most $(\gamma'-1)k$ defective nodes in levels $l=1,\dots,\gamma'-1$, we are adding together $O(\gamma'k)$ independent copies of such random variables (each corresponding to a different non-defective sub-tree following a defective node).\footnote{We do not consider the non-defective nodes at level $l=1$, because they are guaranteed to be identified correctly as a result of individual testing of nodes.} Letting $N_b^{(i)}$ denote the $i$-th copy, we can apply a standard concentration bound for sums of independent sub-exponential random variables \cite[Prop. 5.16]{Ver12} to obtain 
\begin{align}
    \mathbb{P}\big[N_b^{(1)}+\dots+N_b^{(O(\gamma'k))}&\geq\E[N_b^{(1)}+\dots+N_b^{(O(\gamma'k))}]+t|\mathcal{T}_\mathcal{S}\big]\leq\exp\bigg(\Omega\Big(\min\Big\{\frac{t^2}{\gamma'k},t\Big\}\Big)\bigg).
\end{align}
Setting $t=\Theta(\gamma'k)$, we get
\begin{align}
    \mathbb{P}[N_b^{(1)}+\dots+N_b^{(O(\gamma'k))}\geq\E[N_b^{(1)}+\dots+N_b^{(O(\gamma'k))}]+\Theta(\gamma'k)|\mathcal{T}_\mathcal{S}]\leq e^{-\Omega(\gamma'k)}. \label{eq:N_conc}
\end{align}
Recall that each $N_b^{(i)}$ only counts ``internal'' nodes, whereas $N_{\rm total}$ also counts leaves, so passing from the former to the latter requires multiplying by the branching factor $M^{1/(\gamma'-1)}=(n/k)^{1/\gamma'}$.   Multiplying on both sides inside the probability in \eqref{eq:N_conc} accordingly, we obtain
\begin{align}
    \mathbb{P}\bigg[N_{\text{total}}\geq\E[N_{\text{total}}]+\Theta\bigg(\gamma'k\Big(\frac{n}{k}\Big)^{1/\gamma'}\bigg)\Big|\mathcal{T}_\mathcal{S}\bigg]\leq e^{-\Omega(\gamma'k)}. \label{eq:concentration_bound}
\end{align}
Substituting $\E[N_{\text{total}}]=O\big(\gamma'k(n/k)^{1/\gamma'}\big)$ (see Lemma \ref{lem:N_total_average_bound_gamma}) into \eqref{eq:concentration_bound}, we obtain the desired result.
\end{proof}

We now briefly consider level $l=\gamma'-1$, which uses $T'_{\rm len} = \gamma'k(n/k)^{1/\gamma'}$ tests (see Figure \ref{fig:test_constraint_diagram_3cases}).  Since $|\mathcal{PD}^{(\gamma'-1)}|\leq N_{\text{total}}+k$ holds trivially, Lemma \ref{lem:N_total_high_prob_bound_gamma} implies that $|\mathcal{PD}^{(\gamma'-1)}|=O\big(\gamma'k(n/k)^{1/\gamma'}\big)$ with probability $1-e^{-\Omega(\gamma' k)}$. Using the same argument as Lemma \ref{lem:gamma_prob_of_nondef_node_being_in_pos_test}, the probability of a non-defective node being in a positive test at level $l=\gamma'-1$ is at most $k/T'_{\text{len}}=(1/\gamma')(n/k)^{-1/\gamma'}$. Hence, conditioned on $|\mathcal{PD}^{(\gamma'-1)}|=O\big(\gamma'k(n/k)^{1/\gamma'}\big)$, the number of non-defective nodes placed in a positive test is stochastically dominated by
\begin{align}
    \text{Binomial}\bigg( O\Big( \gamma'k\Big(\frac{n}{k}\Big)^{1/\gamma'} \Big),\frac{1}{\gamma'}\Big(\frac{n}{k}\Big)^{-1/\gamma'}\bigg).
\end{align}
By a multiplicative form of Chernoff bound, the number of such non-defective nodes in $\mathcal{PD}^{(\gamma'-1)}$ is $O(k)$ with probability at least $1-e^{-\Omega(k)}$. Since the branching factor is $(n/k)^{1/\gamma'}$, it follows that the number of non-defective nodes in $\mathcal{PD}^{(\gamma')}$ behaves as $O(k(n/k)^{1/\gamma'})$.

\subsection{Analysis of the Final Level} \label{sec:gamma_final_lvl_analysis}

Recall that at the final level, we perform $\gamma-\gamma'+1$ independent sequences of tests of length $T_{\text{len}}''$, with each item being randomly placed in one of these $T_{\text{len}}''$ tests. Conditioned on the high probability event that $|\mathcal{PD}^{(\gamma')}|=O(k(n/k)^{1/\gamma'})$, we study the required $T_{\text{len}}''$ for a vanishing error probability. Specifically, we upper bound the error probability by $O(\beta_n)$ for some decaying function $\beta_n\rightarrow0$ as $n\rightarrow\infty$. 

For a given non-defective item and a given sequence of $T_{\text{len}}''$ tests, the probability of colliding with any defective item is at most $k/T_{\text{len}}''$ by the same argument as Lemma \ref{lem:gamma_prob_of_nondef_node_being_in_pos_test}. Due to the $\gamma-\gamma'+1$ independent repetitions, the probability of a given non-defective item appearing only in positive tests is at most $(k/T_{\text{len}}'')^{\gamma-\gamma'+1}$. By a union bound over $O(k(n/k)^{1/\gamma'})$ non-defective items at the final level, we find that the estimate $\widehat{\mathcal{S}}$ differs from $\mathcal{S}$ with (conditional) probability $O\big(k(n/k)^{1/\gamma'}(k/T_{\text{len}}'')^{\gamma-\gamma'+1}\big)$. The error probability is thus upper bounded by $O(\beta_n)$ provided that
\begin{align}
    &k\Big(\frac{n}{k}\Big)^{\frac{1}{\gamma'}}\Big(\frac{k}{T_{\text{len}}''}\Big)^{\gamma-\gamma'+1}\leq\beta_n \\
    \iff &T_{\text{len}}''\geq k\Big(\frac{k}{\beta_n}\Big)^{\frac{1}{\gamma-\gamma'+1}}\Big(\frac{n}{k}\Big)^{\frac{1}{\gamma'(\gamma-\gamma'+1)}}.
\end{align}
Hence, we set $T_{\text{len}}''=k(k/\beta_n)^{\frac{1}{\gamma-\gamma'+1}}(n/k)^{\frac{1}{\gamma'(\gamma-\gamma'+1)}}$.

\subsection{Number of Tests, Error Probability, and Decoding Time}

\begin{itemize}
    \item \textbf{Number of tests:} For $l=1,\dots,\gamma'-1$, we used a total of $n/M+ C(\gamma'-3)k(n/k)^{1/\gamma'}+\gamma'k(n/k)^{1/\gamma'}$ tests, which scales as $O\big(\gamma'k(n/k)^{1/\gamma'}\big)$ by substituting $M=(n/k)^{\frac{\gamma'-1}{\gamma'}}$ and $C = O(1)$. For the final level, we used $(\gamma-\gamma'+1)T_{\text{len}}''=O\big(\gamma k(k/\beta_n)^{\frac{1}{\gamma-\gamma'+1}}(n/k)^{\frac{1}{\gamma'(\gamma-\gamma'+1)}}\big)$ tests, due to the fact that  $T_{\text{len}}''=k(k/\beta_n)^{\frac{1}{\gamma-\gamma'+1}}(n/k)^{\frac{1}{\gamma'(\gamma-\gamma'+1)}}$. Combining these, we obtain
    \begin{align}
        T&=O\bigg(\gamma k\max\bigg\{\Big(\frac{n}{k}\Big)^{\frac{1}{\gamma'}}
        ,\Big(\frac{k}{\beta_n}\Big)^{\frac{1}{\gamma-\gamma'+1}}\Big(\frac{n}{k}\Big)^{\frac{1}{\gamma'(\gamma-\gamma'+1)}}\bigg\}\bigg). \label{eq:splitting_algo__upper_bound}
    \end{align}
    \item \textbf{Error probability:} The concentration bound on $N_{\text{total}}$ (see Lemma \ref{lem:N_total_high_prob_bound_gamma}) holds with probability $1-e^{-\Omega(\gamma'k)}$, and at level $l=\gamma'-1$, we incur $e^{-\Omega(k)}$ error probability. Furthermore, the final stage incurs $O(\beta_n)$ error (conditional) probability. In total, we incur $\beta_n+e^{-\Omega(\gamma'k)}+e^{-\Omega(k)}=O(\beta_n)+e^{-\Omega(k)}$ error probability.
    \item \textbf{Decoding time:} We claim that conditioned on the high-probability events above (in particular, $N_{\text{total}}=O\big(\gamma'k(n/k)^{1/\gamma'}\big)$), the decoding time is $O\big(\gamma k(n/k)^{1/\gamma'}\big)$. Since we consider the word-RAM model, it takes constant time to check whether each defective node or non-defective node is in a positive or negative test. First considering the levels $l=2,\dotsc,\gamma'-1$, we reached $N_{\text{total}}=O\big(\gamma'k(n/k)^{1/\gamma'}\big)$ non-defective nodes and $O(\gamma'k)$ defective nodes, which leads to a total of $O\big(\gamma'k(n/k)^{1/\gamma'}\big)$ decoding time.  At level $l=1$, we iterate through $\frac{n}{M} = O\big(k(n/k)^{1/\gamma'}\big)$ nodes, and at the final level $l=\gamma'$, for each of the $O\big(k(n/k)^{1/\gamma'}\big)$ relevant leaf nodes, we perform $\gamma-\gamma'+1=O(\gamma)$ checks of tests for a total time of $O\big(\gamma k(n/k)^{1/\gamma'}\big)$. Combining these terms, we deduce the desired claim.
\end{itemize}

\subsection{Note on Optimizing $\gamma'$} \label{sec:convex}

We note that the function $f(\gamma') = \max\big\{\frac{1-\theta}{\gamma'},\frac{\theta}{\gamma-\gamma'+1}+\frac{1-\theta}{\gamma'(\gamma-\gamma'+1)}\big\}$ is convex on $[3,\gamma]$; this is easily proved by computing the second derivative of each term in $\max\{.,.\}$. 
Since a convex function is monotone on either side of its minimum (in this case $(1-\theta)\gamma$), it follows that the optimal choice of $\gamma'$ is given by
\begin{align}
    \gamma'&=\argmin\limits_{\gamma'\in\{3,\dots,\gamma\}}\Big(\gamma k\max\Big\{n^{\frac{1-\theta}{\gamma'}},
    n^{\frac{\theta}{\gamma-\gamma'+1}+\frac{1-\theta}{\gamma'(\gamma-\gamma'+1)}}\Big\}\Big) \\
    &=
    \begin{cases}
    3 &\text{if $(1-\theta)\gamma<3$} \\
    \argmin\limits_{\gamma'\in\{\lfloor(1-\theta)\gamma\rfloor,\lceil(1-\theta)\gamma\rceil\}}\bigg(\max\bigg\{\frac{1-\theta}{\gamma'},\frac{\theta}{\gamma-\gamma'+1}+\frac{1-\theta}{\gamma'(\gamma-\gamma'+1)}\bigg\}\bigg) &\text{otherwise.} \label{eq:gamma'}
    \end{cases}
\end{align}
That is, we can simply evaluate the objective for three values of $\gamma'$, rather than all values.

\section{Proof of Theorem \ref{thm:rho_main_theorem} (Size-Constrained Tests)} \label{sec:rho_algo_analysis}

We start at level $l=0$ (see Figure \ref{fig:size_constraint_diagram}), where we note that the probability of a non-defective node being placed in a positive test is zero because each node is placed in its own individual test.  For subsequent levels, we proceed with the following lemma.

\begin{lemma} \label{lem:rho_prob_of_nondef_node_being_in_pos_test} {\textup{(Probabilities of Non-Defectives Being in Positive Tests)}}
Under the above test design, for any given level $l=1,\dots,C$ and any given iteration indexed by $\{1,\dotsc,N\}$, each non-defective node has probability at most $k\rho/n$ of being placed in a positive test.
\end{lemma}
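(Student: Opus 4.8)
The plan is to mirror the counting argument of Lemma \ref{lem:gamma_prob_of_nondef_node_being_in_pos_test}, but the new difficulty is that in the size-constrained design the nodes are \emph{not} placed into tests independently: the matrix $\mathsf{X}_l$ is drawn uniformly from all matrices with column weight one and row weight $\rho^{l/C}$. Hence I cannot simply multiply a ``fraction of positive tests'' by an independent uniform hitting probability as in the finitely divisible case; the event ``$v$ lands in a positive test'' is correlated with $v$'s own placement.

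First I would record the structural facts at a fixed level $l$ and iteration. The matrix $\mathsf{X}_l$ has $n/\rho$ rows (tests) and $M_l := \frac{n}{\rho}\rho^{l/C}$ columns (nodes), each test contains exactly $\rho^{l/C}$ nodes, and each node lies in exactly one test. Since there are only $k$ defective items and the level-$l$ nodes partition the item set, at most $k$ nodes at level $l$ are defective, so at most $k$ tests are positive. A non-defective node $v$ lies in a positive test if and only if it shares its test with some defective node.

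The core step is to bound $\mathbb{P}[v \text{ in a positive test}]$ by a union bound over the at most $k$ defective nodes, which reduces matters to the pairwise collision probability $\mathbb{P}[d \text{ and } v \text{ in the same test}]$ for a fixed defective $d \ne v$. Exchangeability of the uniform matrix gives this in closed form: viewing $\mathsf{X}_l$ as a uniformly random partition of the $M_l$ nodes into $n/\rho$ blocks of size $\rho^{l/C}$, and fixing $v$'s block, the remaining $M_l - 1$ nodes fill the remaining $M_l - 1$ slots symmetrically, so $d$ falls into one of the $\rho^{l/C}-1$ other slots of $v$'s block with probability $\frac{\rho^{l/C}-1}{M_l - 1}$. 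I would then apply the elementary inequality $\frac{a-1}{b-1} \le \frac{a}{b}$ (valid whenever $a \le b$) with $a = \rho^{l/C}$ and $b = M_l = \frac{n}{\rho}\rho^{l/C}$, where $a \le b$ holds trivially since $n/\rho \ge 1$, to obtain $\frac{\rho^{l/C}-1}{M_l - 1} \le \frac{\rho^{l/C}}{M_l} = \frac{\rho}{n}$. Summing over the at most $k$ defective nodes yields the claimed bound $k\rho/n$.

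The main obstacle is exactly this lack of independence: the clean ``$\le k$ positive tests out of $n/\rho$'' intuition must be converted into a rigorous marginal statement. I expect the pairwise route above to be cleanest, since it requires no assumption beyond $n/\rho \ge 1$. An equivalent alternative is to condition on the defective placements $\mathcal{T}_{\mathcal{S}}$, note that the positive tests are then fixed, observe that $v$ is conditionally placed into test $\tau$ with probability proportional to $\tau$'s residual capacity, and compare the residual capacity inside positive tests ($P\rho^{l/C}-D$, with $P \le k$ positive tests and $D \le k$ defective nodes) against the total residual capacity $M_l - D$; this also yields $k\rho/n$ but needs the extra inequality $k\rho \le n$, which holds since $\rho = o(n/k)$. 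I would therefore present the pairwise-collision argument as the primary proof.
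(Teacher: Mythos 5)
Your proposal is correct and follows essentially the same route as the paper: a union bound over the at most $k$ defective nodes combined with the exact pairwise collision probability $\frac{\rho^{l/C}-1}{\frac{n}{\rho^{1-l/C}}-1}$, which is then bounded by $\rho/n$. The paper derives that collision probability by explicitly counting matrices (a ratio of products of binomial coefficients), whereas you invoke exchangeability of the uniform partition into equal-size blocks; these are the same computation presented differently, and your bounding step $\frac{a-1}{b-1}\le\frac{a}{b}$ is equivalent to the paper's manipulation using $\rho/n\le 1$.
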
 

\begin{proof}
At any given iteration of level $l$, the probability that a non-defective node $u$ collides (i.e., is in the same test) with a given defective node $v$ is
\begin{align}
    \frac{\text{\#matrices with $u$ \& $v$ in test $1$}}{\text{\#matrices with $v$ in test $1$}}
    &\stackrel{(a)}{=}\frac{{\frac{n}{\rho^{1-l/C}}-2\choose\rho^{l/C}-2}\prod_{i=1}^{n/\rho-1}{\frac{n}{\rho^{1-l/C}}-i\rho^{l/C}\choose\rho^{l/C}}}
    {{\frac{n}{\rho^{1-l/C}}-1\choose\rho^{l/C}-1}\prod_{i=1}^{n/\rho-1}{\frac{n}{\rho^{1-l/C}}-i\rho^{l/C}\choose\rho^{l/C}}} \\
    &=\frac{{\frac{n}{\rho^{1-l/C}}-2\choose\rho^{l/C}-2}}{{\frac{n}{\rho^{1-l/C}}-1\choose\rho^{l/C}-1}}
    \stackrel{(b)}{=}\frac{\rho^{l/C}-1}{\frac{n}{\rho^{1-l/C}}-1} \\
    &=\frac{\rho}{n}\bigg(\frac{\rho^{l/C}-1}{\rho^{l/C}-\rho/n}\bigg)
    \stackrel{(c)}{\leq}\frac{\rho}{n},
\end{align}
where:
\begin{itemize}
    \item (a) follows by considering the rows of the matrix $\mathsf{X}_l$ (of size $\frac{n}{\rho} \times \frac{n}{\rho^{1-l/C}}$, column weight one, and row weight $\rho^{l/C}$) sequentially to count the number of possible matrices. For the numerator, we start with the first row, where $u$ and $v$ collide.  The number of ways to fill this row (i.e., assigning items to this test) is the first term in the numerator. For the remaining $n/\rho-1$ rows, in any particular order, the number of ways to fill those rows (while maintaining column weights of one) is represented by the second product term. The same analysis is then repeated for the denominator.
    \item (b) follows by expanding the binomial coefficient in terms of factorials, and then simplifying.
    \item (c) follows from the fact that $\rho/n\leq1$.
\end{itemize}
Since there are at most $k$ defective nodes, by the union bound, we find that the probability that a non-defective node collides with any defective node is at most $k\rho/n$.
\end{proof}

The following technical lemma will also be used on several occasions.

\begin{lemma} \label{lem:asymp}
    For any $k$ and $\rho$ satisfying $k=O\big(n^{1-\epsilon_1}\big)$ for some $\epsilon_1\in(0,1]$ and $\rho=O\big((n/k)^{1-\epsilon_2}\big)$ for some $\epsilon_2 \in (0,1]$, we have the following:
    \begin{itemize}
        \item For sufficiently large $C$, we have $\frac{k\rho^{1/C}}{n/\rho} = n^{-\Omega(1)}$;
        \item For any $\zeta_1 > 0$ , we have for sufficiently large $C$ and $N$ that $\rho^{1/C}\big(\frac{k\rho}{n}\big)^N = O(n^{-\zeta_1})$.
    \end{itemize}
    In addition, if $\rho = O(1)$, then the same holds true for any fixed $C>0$, only requiring $N$ to be sufficiently large in the second part.
\end{lemma}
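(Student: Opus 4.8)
The plan is to reduce the entire lemma to a single scaling estimate on $\frac{k\rho}{n}$, after which both bullets (and the $\rho = O(1)$ addendum) follow by absorbing the factor $\rho^{1/C}$, which never exceeds $n^{1/C}$. First I would establish the crux estimate: combining $k = O\big(n^{1-\epsilon_1}\big)$ with $\rho = O\big((n/k)^{1-\epsilon_2}\big)$ gives
\[
k\rho = O\big(k\cdot(n/k)^{1-\epsilon_2}\big) = O\big(k^{\epsilon_2} n^{1-\epsilon_2}\big) = O\big(n^{(1-\epsilon_1)\epsilon_2}\cdot n^{1-\epsilon_2}\big) = O\big(n^{1-\epsilon_1\epsilon_2}\big),
\]
so that, writing $\delta := \epsilon_1\epsilon_2 > 0$, we have $\frac{k\rho}{n} = O(n^{-\delta})$. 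Everything else is mechanical consequences of this bound together with $\rho \le n$ (which holds since $\rho = O(n/k) \le O(n)$), hence $\rho^{1/C} = O(n^{1/C})$.

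For the first bullet I would rewrite $\frac{k\rho^{1/C}}{n/\rho} = \frac{k\rho}{n}\cdot\rho^{1/C}$ and apply the two bounds above to get $\frac{k\rho^{1/C}}{n/\rho} = O\big(n^{1/C-\delta}\big)$. This is $n^{-\Omega(1)}$ as soon as $C > 1/\delta$, which is exactly the meaning of ``sufficiently large $C$.''

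For the second bullet I would raise the crux to the $N$-th power, $\big(\frac{k\rho}{n}\big)^N = O\big(n^{-N\delta}\big)$, and again use $\rho^{1/C} = O(n^{1/C})$, giving $\rho^{1/C}\big(\frac{k\rho}{n}\big)^N = O\big(n^{1/C - N\delta}\big)$. Here the order of quantifiers is the only thing requiring attention: I would fix any $C \ge 1$ first (pinning the growing factor $n^{1/C}$), and then choose $N$ large enough that $N\delta - 1/C \ge \zeta_1$, i.e. $N \ge (\zeta_1 + 1/C)/\delta$, which yields $O(n^{-\zeta_1})$.

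For the $\rho = O(1)$ addendum, the single change is that $\rho^{1/C} = O(1)$ for \emph{every} fixed $C > 0$, so the troublesome factor $n^{1/C}$ disappears and is replaced by a constant. Then the first bullet becomes $\frac{k\rho^{1+1/C}}{n} = O(k/n) = O(n^{-\epsilon_1}) = n^{-\Omega(1)}$ for any fixed $C$, and the second becomes $O(1)\cdot\big(\frac{k\rho}{n}\big)^N = O\big(n^{-N\epsilon_1}\big)$ (using $k\rho/n = O(k/n) = O(n^{-\epsilon_1})$ when $\rho = O(1)$), which is $O(n^{-\zeta_1})$ once $N \ge \zeta_1/\epsilon_1$. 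As for the main obstacle, there is none of real substance: this is a routine asymptotic calculation, and the only points needing mild care are (i) correctly combining the two hypotheses to extract the exponent $\epsilon_1\epsilon_2$, and (ii) choosing $C$ before $N$ in the second bullet so that the polynomially-large factor $\rho^{1/C} \le n^{1/C}$ is dominated by the decay, with the $\rho = O(1)$ case decoupling $C$ from $n$ entirely.
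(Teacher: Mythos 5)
Your proof is correct and follows essentially the same route as the paper's: both arguments substitute the hypotheses $k=O(n^{1-\epsilon_1})$ and $\rho=O((n/k)^{1-\epsilon_2})$ and read off the resulting exponent of $n$, with your crux bound $\frac{k\rho}{n}=O(n^{-\epsilon_1\epsilon_2})$ matching the paper's computation up to rearrangement. The only cosmetic difference is that you bound $\rho^{1/C}$ crudely by $n^{1/C}$ where the paper keeps $(n/k)^{(1-\epsilon_2)/C}$, which merely changes the threshold on $C$ without affecting the conclusion.
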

\begin{proof}
    For the first part, we write
    \begin{align}
        \frac{n/\rho}{k\rho^{1/C}}&=\frac{n/k}{\rho^{1+1/C}}
        \stackrel{(a)}{=}\Omega\bigg(\Big(\frac{n}{k}\Big)^{\epsilon_2-\frac{1-\epsilon_2}{C}}\bigg)
        \stackrel{(b)}{=}\Omega\big(n^{\epsilon_1(\epsilon_2-\frac{1-\epsilon_2}{C})}\big), \label{eq:k*rho^(1/C)<<n/rho_proof}
    \end{align}
    where (a) is by substituting $\rho=O\big((n/k)^{1-\epsilon_2}\big)$ and simplifying, and (b) is by substituting $k=O\big(n^{1-\epsilon_1}\big)$ and simplifying. Note that the power is positive for sufficiently large $C$.

    For the second part, we write
    \begin{align}
        \rho^{1/C}\Big(\frac{k\rho}{n}\Big)^{N}
        &\stackrel{(a)}{=} O\bigg(\Big(\frac{n}{k}\Big)^{\frac{1-\epsilon_2}{C}-\epsilon_2N}\bigg)
        \stackrel{(b)}{=}O\big(n^{\epsilon_1(\frac{1-\epsilon_2}{C}-\epsilon_2N)}\big),
    \end{align}
    where (a) is by substituting $\rho=O\big((n/k)^{1-\epsilon_2}\big)$ and simplifying, and (b) is by substituting $k=O\big(n^{1-\epsilon_1}\big)$ and simplifying.  Note that the power can be made arbitrarily negative by choosing $N$ and $C$ sufficiently large.
    
    For the final part regarding $\rho = O(1)$, we simply note that the two claims reduce to (i) $\frac{k}{n} = n^{-\Omega(1)}$, and (ii) $\big(\frac{k}{n}\big)^{N} = O(n^{-\zeta_1})$ for sufficiently large $N$.  Both of these are true since $k = O(n^{1-\epsilon_1})$.
\end{proof}

We will show that throughout the course of the algorithm, for levels $l=1,\dots,C$, the size of the possibly defective set $\mathcal{PD}^{(l)}$ remains at $O\big(k\rho^{1/C}\big)$ with high probability.  We show this using an induction argument.  

\subsection{Analysis of Levels $l=1,\dots,C-1$}

For the base case $l=1$, we start by looking at the preceding level $l=0$. Each node at level $l=0$ is allocated to an individual test, which implies that all nodes in $l=0$ are identified correctly. Hence, only the children of the defective nodes in $l=0$ are ``explored'' further in $l=1$. Since the number of defective nodes in $l=0$ is at most $k$ and each node has $\rho^{1/C}$ children, we have $|\mathcal{PD}^{(1)}|\leq k\rho^{1/C}$. 

Consider a non-defective node indexed by $i$ at a given level $l > 1$ having $k'\leq k$ defective nodes, and let $A_i$ be the indicator random variable of that non-defective node colliding with at least one defective node in all of its $N$ repetitions. The dependence of these quantities on $l$ is left implicit. We condition on all of the test placements performed at the earlier levels, writing $\mathbb{E}_l[\cdot]$ for the conditional expectation. By the inductive hypothesis, we have $|\mathcal{PD}^{(l)}|=O\big(k\rho^{1/C}\big)$.

\begin{lemma} \label{lem:average_bound_rho_case}
Under the preceding setup and definitions, if $|\mathcal{PD}^{(l)}|=O\big(k\rho^{1/C}\big)$, then we have
\begin{align}
    \E_l\Big[\sum_iA_i\Big]=O\bigg(k\rho^{1/C}\cdot\Big(\frac{k\rho}{n}\Big)^{N}\bigg).
\end{align}
\end{lemma}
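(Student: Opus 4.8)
The plan is to bound the expected number of non-defective nodes that survive all $N$ repetitions at level $l$ by combining the per-iteration collision probability from Lemma \ref{lem:rho_prob_of_nondef_node_being_in_pos_test} with the inductive bound on $|\mathcal{PD}^{(l)}|$. First I would fix a non-defective node $i$ in $\mathcal{PD}^{(l)}$ and analyze its indicator $A_i$. The node survives to be added to $\mathcal{PD}^{(l+1)}$ only if it is placed in a positive test in \emph{every} one of its $N$ iterations. By Lemma \ref{lem:rho_prob_of_nondef_node_being_in_pos_test}, in a single iteration the probability that node $i$ collides with at least one defective node is at most $k\rho/n$. Since the $N$ iterations use independently drawn matrices $\mathsf{X}_l$ (a fresh matrix per iteration), the events across iterations are independent conditioned on the defective test placements, so
\begin{align}
    \E_l[A_i] = \mathbb{P}[A_i = 1] \leq \Big(\frac{k\rho}{n}\Big)^{N}.
\end{align}

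Next I would sum over all non-defective nodes under consideration at level $l$ and apply linearity of expectation. The number of non-defective nodes in $\mathcal{PD}^{(l)}$ is at most $|\mathcal{PD}^{(l)}| = O\big(k\rho^{1/C}\big)$ by the inductive hypothesis. Therefore
\begin{align}
    \E_l\Big[\sum_i A_i\Big] = \sum_i \E_l[A_i] \leq |\mathcal{PD}^{(l)}| \cdot \Big(\frac{k\rho}{n}\Big)^{N} = O\bigg(k\rho^{1/C}\cdot\Big(\frac{k\rho}{n}\Big)^{N}\bigg),
\end{align}
which is the claimed bound. The essential ingredients are just the single-iteration collision probability, the independence across the $N$ repetitions, and the inductive size bound on $\mathcal{PD}^{(l)}$.

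The one subtlety I would be careful about is the conditioning structure. The expectation $\E_l[\cdot]$ conditions on all test placements at earlier levels (and implicitly on $\mathcal{T}_{\mathcal{S}}$), which is exactly what fixes the set $\mathcal{PD}^{(l)}$ and hence determines which non-defective nodes the sum ranges over; the randomness being averaged is only the fresh matrices drawn at level $l$. I would note that Lemma \ref{lem:rho_prob_of_nondef_node_being_in_pos_test} gives a bound that holds for each iteration uniformly regardless of $\mathcal{T}_{\mathcal{S}}$, so the per-iteration bound $k\rho/n$ applies verbatim under this conditioning, and independence of the $N$ matrices is what lets me raise it to the $N$-th power. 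The main obstacle is thus not a hard calculation but rather stating the conditioning and independence cleanly enough that raising to the power $N$ is justified; once that is in place the result is immediate from linearity of expectation and the inductive hypothesis.
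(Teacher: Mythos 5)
Your proposal is correct and follows essentially the same route as the paper's proof: apply Lemma \ref{lem:rho_prob_of_nondef_node_being_in_pos_test} per iteration, use the independence of the $N$ fresh matrices to raise the collision probability to the $N$-th power, then sum via linearity of expectation over the $O\big(k\rho^{1/C}\big)$ nodes in $\mathcal{PD}^{(l)}$. Your extra remarks on the conditioning structure are a slightly more careful statement of what the paper leaves implicit, but the argument is the same.
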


\begin{proof}
From Lemma \ref{lem:rho_prob_of_nondef_node_being_in_pos_test}, we know that a given non-defective item $i$ has a probability at most $k\rho/n$ of being placed in a positive test. Since we used $N$ independent test design matrices $\mathsf{X}_l$ to assign $i$ to $N$ tests, we have $\mathbb{P}_l[A_i]\leq(k\rho/n)^{N}$. Hence, we have
\begin{align}
    \E_l\Big[\sum_iA_i\Big]
    &=\sum_i\E_l[A_i]
    =\sum_i\mathbb{P}_l[A_i = 1]
    \leq\sum_i\Big(\frac{k\rho}{n}\Big)^{N}
    =O\bigg(k\rho^{1/C}\cdot\Big(\frac{k\rho}{n}\Big)^{N}\bigg),
\end{align}
where we used the linearity of expectation and the fact that $|\mathcal{PD}^{(l)}|=O\big(k\rho^{1/C}\big)$.
\end{proof}

\begin{lemma}
For any constant $\zeta_1>0$, there exist choices of $C$ and $N$ such that the following holds: Conditioned on the $l$-th level having $|\mathcal{PD}^{(l)}|=O\big(k\rho^{1/C}\big)$, the same is true at the $(l+1)$-th level with probability $1-O\big(n^{-\zeta_1}\big)$.
\end{lemma}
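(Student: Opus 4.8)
The plan is to decompose $\mathcal{PD}^{(l+1)}$ according to whether each of its nodes descends from a defective or a non-defective node at level $l$, and then to control the non-defective contribution via a first-moment (Markov) argument built on Lemma~\ref{lem:average_bound_rho_case}. First I would condition on all test placements at levels $\le l$, so that $\mathcal{PD}^{(l)}$ is deterministic and, by hypothesis, contains $O\big(k\rho^{1/C}\big)$ nodes, while the level-$l$ randomness (the $N$ matrices $\mathsf{X}_l$) remains fresh. A node $\mathcal{G}\in\mathcal{PD}^{(l)}$ passes its $\rho^{1/C}$ children to $\mathcal{PD}^{(l+1)}$ exactly when all $N$ of its tests are positive. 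Every defective node at level $l$ always produces positive tests, but there are at most $k$ of them, so defective nodes contribute at most $k\rho^{1/C}$ children; the remaining contribution comes from the surviving non-defective nodes, whose number is $\sum_i A_i$ in the notation of Lemma~\ref{lem:average_bound_rho_case}, each contributing $\rho^{1/C}$ children. Hence
\begin{align}
    |\mathcal{PD}^{(l+1)}| \le k\rho^{1/C} + \rho^{1/C}\sum_i A_i, \nonumber
\end{align}
so it suffices to show $\sum_i A_i = O(k)$ with probability $1 - O\big(n^{-\zeta_1}\big)$.

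Next I would invoke Lemma~\ref{lem:average_bound_rho_case}, which gives $\E_l\big[\sum_i A_i\big] = O\big(k\rho^{1/C}(k\rho/n)^{N}\big)$, together with the second part of Lemma~\ref{lem:asymp}, which states that for sufficiently large $C$ and $N$ we have $\rho^{1/C}(k\rho/n)^{N} = O\big(n^{-\zeta_1}\big)$. Combining these yields $\E_l\big[\sum_i A_i\big] = O\big(k\, n^{-\zeta_1}\big)$. Applying Markov's inequality with threshold $k$ then gives $\mathbb{P}_l\big[\sum_i A_i \ge k\big] \le \E_l\big[\sum_i A_i\big]/k = O\big(n^{-\zeta_1}\big)$. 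On the complementary event, $\sum_i A_i < k$, whence $|\mathcal{PD}^{(l+1)}| < 2k\rho^{1/C} = O\big(k\rho^{1/C}\big)$, as claimed, and the same $C$ and $N$ furnished by Lemma~\ref{lem:asymp} serve throughout.

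The step requiring the most care is the choice of concentration tool rather than any hard computation. The survival indicators $A_i$ share the level-$l$ test matrices and the common defective placements, so they are positively correlated, and a Chernoff-type argument would demand extra work to justify. The key observation is that this is unnecessary: since the expected number of non-defective survivors is already $O\big(k\,n^{-\zeta_1}\big)$, far below $k$, the crude first-moment bound via Markov's inequality is enough to push the failure probability down to $O\big(n^{-\zeta_1}\big)$. The only other point to watch is the bookkeeping of the conditioning, ensuring that $\mathcal{PD}^{(l)}$ (and hence the index set over which $i$ ranges) is deterministic under $\mathbb{E}_l[\cdot]$ so that Lemma~\ref{lem:average_bound_rho_case} applies verbatim.
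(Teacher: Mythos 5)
Your proposal is correct and follows essentially the same route as the paper's proof: decompose the children of $\mathcal{PD}^{(l+1)}$ into those from defective and non-defective parents, bound the expected number of surviving non-defective nodes via Lemma \ref{lem:average_bound_rho_case}, and apply Markov's inequality with threshold $k$ together with Lemma \ref{lem:asymp} to get the $O\big(n^{-\zeta_1}\big)$ failure probability and the $2k\rho^{1/C}$ bound. Your added remark on why a first-moment bound suffices despite the correlations among the $A_i$ is a nice clarification but does not change the argument.
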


\begin{proof}
Among the possibly defective nodes at the $l$-th level, at most $k$ are defective, amounting to at most $k\rho^{1/C}$ children at the next level. Furthermore, by Lemma \ref{lem:average_bound_rho_case} and Markov's inequality, at most $k$ non-defective nodes are marked as possibly defective, with probability at least
\begin{align}
    1-O\bigg(\rho^{1/C}\Big(\frac{k\rho}{n}\Big)^{N}\bigg) = 1 - O(n^{-\zeta_1}), 
\end{align}
where the equality holds for any $\zeta_1 > 0$ by suitable choices of $C$ and $N$ (see Lemma \ref{lem:asymp}).
Thus, this also amounts to at most $k\rho^{1/C}$ additional children at the next level. Summing these together, we have $|\mathcal{PD}^{(l+1)}| \le 2 k\rho^{1/C}$, with probability at least $1-O\big(n^{-\zeta_1}\big)$.
\end{proof}

By induction, for any given level $l$, we have $|\mathcal{PD}^{(l)}|=O\big(k\rho^{1/C}\big)$ with conditional probability at least $1-O\big(n^{-\zeta_1}\big)$. Taking a union bound over all $C$ levels (with $C = O(1)$), the same follows for all levels simultaneously with probability at least $1-O\big(n^{-\zeta_1}\big)$.

\subsection{Analysis of the Final Level}

Recall that at the final level, we perform $C'n/\rho$ tests. We study the error probability conditioned on the high-probability event $|\mathcal{PD}^{(C)}|=O\big(k\rho^{1/C}\big)$.

For a given non-defective item in a single iteration of the $C'$ independent iterations of tests, by Lemma \ref{lem:rho_prob_of_nondef_node_being_in_pos_test}, the probability of appearing in a positive test is at most $k\rho/n$. Since the non-defective item participates in $C'$ independent tests, the probability of it appearing only in positive tests is $(k\rho/n)^{C'}$. By a union bound over the $O\big(k\rho^{1/C}\big)$ non-defective singletons at the final level, the error probability is upper bounded by
\begin{align}
    O\bigg(k\rho^{1/C}\Big(\frac{k\rho}{n}\Big)^{C'}\bigg) = O(n^{-\zeta_2}),
\end{align}
where the equality holds for any $\zeta_2 > 0$ and suitably-chosen $C$ and $C'$ due to Lemma \ref{lem:asymp} (with $C'$ replacing $N$).


\subsection{Number of Tests, Error Probability, and Decoding Time}

\begin{itemize}
    \item \textbf{Number of tests:} We used $CNn/\rho$ tests in the first $C$ levels and $C'n/\rho$ tests in the final level, which sums up to $CNn/\rho+C'n/\rho=O(n/\rho)$.
    \item \textbf{Error probability:} For each level $l$, we have $|\mathcal{PD}^{(l)}|=O\big(k\rho^{1/C}\big)$ with probability $1-O\big(n^{-\zeta_1}\big)$. Furthermore, the final level incurs $O\big(n^{-\zeta_2}\big)$ error probability. This gives us a total error probability of $O\big(n^{-\zeta_1}+n^{-\zeta_2}\big)=O\big(n^{-\zeta}\big)$, where $\zeta=\min\{\zeta_1,\zeta_2\}$.  Since we allowed $\zeta_1$ and $\zeta_2$ to be arbitrarily large, the same holds for $\zeta$.
    \item \textbf{Decoding time:} The decoding time is dominated by the test outcome checks in our decoding procedure. For the first level $l=0$, we have $|\mathcal{PD}^{(0)}|=n/\rho$, which coincides with the total number of test outcome checks. For the remaining $C-1$ levels $l\in\{1,\dots,C-1\}$, we considered a total of $O\big(k\rho^{1/C}\big)$ possibly defective nodes w.h.p.,\footnote{Here and subsequently, we write {\em with high probability} (w.h.p.) to mean holding under the high-probability events used in proving that the algorithm succeeds.} and for each possibly defective item, we conducted $N$ test outcome checks. This gives us total number of $O\big(k\rho^{1/C}\big)$ test outcome checks. At the final level, for each of the $O\big(k\rho^{1/C}\big)$ relevant leaf nodes, we perform $C'$ test outcome checks for a total time of $O\big(k\rho^{1/C}\big)$. Summing these gives $O(n/\rho)$, since $O\big(k\rho^{1/C}\big)=o(n/\rho)$ for a sufficiently large $C$ (refer to \eqref{eq:k*rho^(1/C)<<n/rho_proof}). Since it takes $O(1)$ time to check whether each node is in a positive or negative test, we get a total decoding time of $O(n/\rho)$.
\end{itemize}

\section{Proof of Theorem \ref{thm:noisy_main_theorem} (Noisy Setting)} \label{sec:noisy_algo_analysis}

The outline of the analysis is as follows:
\begin{itemize}
    \item We first consider levels $l=\log_2k,\dots,\log_2n-1$, and bound the probability that any node among three kinds---non-defective nodes at level $l_{\text{min}} = \log_2 k$, defective nodes, and non-defective child nodes of defective nodes---are identified wrongly. Note that we do not have to consider other nodes, because if none of the nodes of these three kinds are identified wrongly, then the algorithm would not explore any of the other nodes when decoding.
    \item Conditioned on the correct identification of nodes of these three kinds, we consider the final level $l=\log_2n$ and provide a bound for its error probability.
\end{itemize}

\subsection{Analysis of Levels $l=\log_2k,\dots,\log_2n-1$}

We consider defective and non-defective nodes separately.

\textbf{Defective nodes:} Recall the notions of intermediate labels and final labels from Section \ref{sec:noisy_algo_descrip}.  Let $p_{\text{int}}^{(\text{d})}$ (respectively, $p_{\text{final}}^{(\text{d})}$) be the probability that the intermediate label (respectively, final label) of a given defective node is flipped from a one to a zero. Note that these may vary from node to node, but we will give upper bounds that hold uniformly.

For a given defective node, there are only two possible situations for each test it is in: A positive outcome due to no flip, or a negative test outcome due to a $1\rightarrow0$ flip. Hence, the number of negative tests that a given defective node participates in (i.e., the outcome is flipped) is distributed as $\text{Binomial}(N,p)$. By the majority voting of $N$ test outcomes at a given level, $p_{\text{int}}^{(\text{d})}$ is upper bounded by the probability that a given defective node participates in at least $N/2$ negative tests. Applying Hoeffding's inequality, we obtain
\begin{align}
    p_{\text{int}}^{(\text{d})}&\leq\exp\bigg(-2N\Big(\frac{1}{2}-p\Big)^2\bigg).
\end{align}
At this point, we introduce the variable $t$ appearing in the theorem statement. Since $\exp\big(-2N(1/2-p)^2\big)\leq \frac{2^{-2t}}{4}\Leftrightarrow N\geq \frac{2t\log2+\log4}{2(1/2-p)^2}$, we find that choosing $N\geq \frac{2t\log2+\log4}{2(1/2-p)^2}$ ensures that
\begin{align}
    p_{\text{int}}^{(\text{d})}\leq\frac{2^{-2t}}{4}. \label{eq:rho_int_def_upperbound}
\end{align}

For the case that $l\leq\log_2n-r$, we consider the length-$r$ paths below the defective node. The defective node will be labeled as negative if all $2^r$ paths below it have at least $r/2$ negative intermediate labels. The probability of this event is upper bounded by the probability that one particular \textit{defective} path (i.e., every node along the path is defective) has at least $r/2$ negative intermediate labels, which is at most
\begin{align}
    {r\choose r/2}\big(p_{\text{int}}^{(\text{d})}\big)^{r/2}\leq\big(4p_{\text{int}}^{(\text{d})}\big)^{r/2}, \label{eq:p_final_bound}
\end{align}
where the left hand side (LHS) is by the union bound, and the right hand side (RHS) is by ${r\choose r/2}\leq2^r$.
This gives $p_{\text{final}}^{(\text{d})}\leq\big(4p_{\text{int}}^{(\text{d})}\big)^{r/2}$, and substituting \eqref{eq:rho_int_def_upperbound} gives $p_{\text{final}}^{(\text{d})}\leq2^{-tr}$. 

For the case that $l>\log_2n-r$ (i.e., there are less than $r$ levels below the given node), the probability of the (single) defective path having at least $r/2$ negative intermediate labels remains unchanged, and hence, the preceding bound $p_{\text{final}}^{(\text{d})}\leq2^{-tr}$ still holds. Note that this step requires $C'\log_2n\geq r$ in order to have enough intermediate labels per node in the final level to ``pad'' paths of length less than $r$ (see Section \ref{sec:noisy_algo_descrip}), and we will later set $C'$ and $r$ to ensure  this.

\textbf{Non-defective nodes:} Let $p_{\text{int}}^{(\text{nd})}$ (respectively, $p_{\text{final}}^{(\text{nd})}$) be the probability that the intermediate label (respectively, final label) of a given non-defective node is flipped from a zero to a one. Again, these may vary from node to node, but we will give upper bounds that hold uniformly. For a given non-defective node, there are four possible situations for each test: A negative outcome with no flip (i.e., no defectives), a negative outcome due to a $1\rightarrow0$ flip (i.e., at least one defective), a positive outcome with no flip (i.e., at least one defective), and a positive outcome due to a $0\rightarrow1$ flip (i.e., no defectives).

Focusing on one test sequence of length $T_{\text{len}} = Ck$ for now, let $A$ be the event that a given non-defective node participates in a positive test, and let $B$ be the event that the given node's test contains no defective item. We have
\begin{align}
    \mathbb{P}[A]
    &=\mathbb{P}[B]\cdot\mathbb{P}[A|B]+\mathbb{P}[\neg B]\cdot\mathbb{P}[A|\neg B] \\
    &\stackrel{(a)}{\leq}\mathbb{P}[B]\cdot p+\frac{1}{C}(1-p) \\
    &\leq p+\frac{1}{C}, \label{eq:upperbound_of_P[A]}
\end{align}
where (a) holds since the probability of being in the same test as a given defective node is $1/T_{\rm len} = 1/(Ck)$, and thus the union bound over $k$ defective nodes gives $\mathbb{P}[\neg B] \le 1/C$.

Equation \eqref{eq:upperbound_of_P[A]} implies that for a given non-defective node, the number of positive tests that it participates in (out of $N$ tests in total) is stochastically dominated by $\text{Binomial}(N,p+1/C)$. Recalling that $p_{\text{int}}^{(\text{nd})}$ is the probability that a given non-defective node participates in at least $N/2$ positive tests, Hoeffding's inequality gives
\begin{align}
    p_{\text{int}}^{(\text{nd})}&\leq\exp\bigg(-2N\Big(\frac{1}{2}-p-\frac{1}{C}\Big)^2\bigg), \label{eq:rho_int_nondef_upperbound0}
\end{align}
where we require $1/2-p-1/C>0\Leftrightarrow C>2/(1-2p)$. Hence, we set $C=\lceil2/(1-2p)\rceil+1$. Since $\exp\big(-2N(1/2-p-1/C)^2\big)\leq \frac{2^{-2t}}{16} \Leftrightarrow N\geq \frac{2t\log2+\log16}{2(1/2-p-1/C)^2}$, we find that choosing $N\geq \frac{2t\log2+\log16}{2(1/2-p-1/C)^2}$ ensures that
\begin{align}
    p_{\text{int}}^{(\text{nd})}\leq\frac{2^{-2t}}{16}. \label{eq:rho_int_nondef_upperbound}
\end{align}

For the case that $l\leq\log_2n-r$, we look at the length-$r$ path below the non-defective node. The non-defective node will be labeled as positive if any of the $2^r$ paths below it has at least $r/2$ positive intermediate labels. By a union bound over all $2^r$ paths, this probability is upper bounded as follows, similar to \eqref{eq:p_final_bound}:
\begin{align}
    2^r{r\choose r/2}\big(p_{\text{int}}^{(\text{nd})}\big)^{r/2}
    \leq2^r\big(4p_{\text{int}}^{(\text{nd})}\big)^{r/2}
    \leq\big(16p_{\text{int}}^{(\text{nd})}\big)^{r/2}.
\end{align}
This gives $p_{\text{final}}^{(\text{nd})}\leq\big(16p_{\text{int}}^{(\text{nd})}\big)^{r/2}$, and substituting \eqref{eq:rho_int_nondef_upperbound} gives $p_{\text{final}}^{(\text{nd})}\leq2^{-tr}$.

Similarly to the defective nodes handled above, the case that $l>\log_2n-r$ follows essentially unchanged; while the above analysis has an additional union bound over $2^r$ paths, the number of paths when $l > \log_2n - r$ only gets smaller.  Hence, the preceding bound on $p_{\text{final}}^{(\text{nd})}\leq2^{-tr}$ also holds in this case.


\textbf{Combining the defective and non-defective cases:} Taking the more stringent requirement on $N$ in the above two cases, we set
\begin{align}
    N&=\bigg\lceil\frac{2t\log2+\log16}{2(1/2-p-1/C)^2}\bigg\rceil, \label{eq:Nchoice}
\end{align}
and we observe that regardless of the defectivity of a given node, the probability of the node's final label being wrong is at most $2^{-tr}$. 

Next, we upper bound the probability that any node among three groups---non-defective nodes at level $l_{\text{min}}$, defective nodes, and child nodes of defective nodes---is identified wrongly.  Note that if all such nodes are identified correctly, then the branching is only ever continued for defective nodes, and it follows that at most $2k$ nodes remain at the final level (analyzed below).

Since there are $\log_2(n/k)$ levels and $k$ defectives, the number of non-defective children nodes of defective nodes is at most $k\log_2(n/k)$, and the number of non-defective nodes at level $l_{\text{min}}$ is at most $k$. Summing these up, we have at most $2k\log_2(n/k)+k$ nodes. By taking the union bound over all $2k\log_2(n/k)+k$ nodes, the probability of making an error in identifying any node in the mentioned three groups is at most $2^{-tr}(2k\log_2(n/k)+k)$. This can be upper bounded by a given target value $\beta_n$ (approaching zero as $n\rightarrow\infty$), provided that
\begin{align}
    2^{-tr}\bigg(2k\log_2\Big(\frac{n}{k}\Big)+k\bigg)\leq\beta_n, \label{eq:beta_n_eq}
\end{align}
which rearranges to give
\begin{align}
    r&\geq\frac{1}{t}\log_2\bigg(\frac{2k}{\beta_n}\log_2\Big(\frac{n}{k}\Big)+\frac{k}{\beta_n}\bigg).
\end{align}
By choosing
\begin{align}
    r=\bigg\lceil\frac{1}{t}\log_2\bigg(\frac{3k}{\beta_n}\log_2\Big(\frac{n}{k}\Big)\bigg)\bigg\rceil,
\end{align}
we deduce that the probability of any wrong decision is upper bounded by $\beta_n$.

\subsection{Analysis of the Final Level}


Recall from the analyses of \eqref{eq:rho_int_def_upperbound} and \eqref{eq:rho_int_nondef_upperbound} that given our choice of $N$ in \eqref{eq:Nchoice}, regardless of the defectivity of a given node, the probability of a wrong intermediate label---let us call this $p_{\text{int}}$---is at most $2^{-2t}/4$. To get the final label of each node (singleton), we conduct a majority voting of $C'\log_2n$ intermediate labels. Hence, a given node is labeled wrongly when it has at least $(C'\log_2n)/2$ wrong intermediate labels. This gives the following upper bound on the probability of a wrong final label, denoted by $p_{\text{final}}$:
\begin{align}
    p_{\text{final}}
    &\leq{C'\log_2n\choose (C'\log_2n)/2}\big(p_{\text{int}}\big)^{(C'\log_2n)/2}
    \stackrel{(a)}{\leq}\big(4p_{\text{int}}\big)^{(C'\log_2n)/2}
    \stackrel{(b)}{\leq}2^{-tC'\log_2n},
\end{align}
where (a) uses ${x \choose x/2} \le 2^x$, and (b) uses $p_{\text{int}}\leq2^{-2t}/4$. Taking the union bound over all $n$ nodes at the final level, we obtain
\begin{align}
    n\big(2^{-tC'\log_2n}\big)
    &=n\big(n^{-tC'}\big)
    =O(n^{1-tC'}),
\end{align}
which approaches zero as $n\rightarrow\infty$ as long as $tC' > 1$. Note that while we have shown that all $n$ nodes (singletons) at the final level would be correctly identified if their final labels were to be computed, only at most $2k$ of these will actually be used by the algorithm, in accordance with the above analysis.


\subsection{Number of Tests, Error Probability, and Decoding Time}

For convenience, we restate all the values that we have assigned in our analysis above:
\begin{align}
    C&=\bigg\lceil\frac{2}{1-2p}\bigg\rceil+1=O(1) \\
    N&=\bigg\lceil\frac{2t\log2+\log16}{2(1/2-p-1/C)^2}\bigg\rceil=O(t) \\
    r&=\bigg\lceil\frac{1}{t}\log_2\bigg(\frac{3k}{\beta_n}\log_2\Big(\frac{n}{k}\Big)\bigg)\bigg\rceil
    =O\bigg(\frac{1}{t}\log\Big(\frac{k\log(n/k)}{\beta_n}\Big)\bigg), \label{eq:r_formula_general}
\end{align}
where $p\in(0,1/2)$ is the noise level. Now, we choose $t=O(1)$ and $\beta_n=\big(k\log_2(n/k)\big)^{1-\epsilon t}$, for some constant $\epsilon\in(1/t,1)$. Substituting $\beta_n=\big(k\log_2(n/k)\big)^{1-\epsilon t}$ into \eqref{eq:r_formula_general} gives
\begin{align}
    r&=\bigg\lceil\frac{1}{t}\log_2\bigg(\frac{3k\log_2(n/k)}{\big(k\log_2(n/k)\big)^{1-\epsilon t}}\bigg)\bigg\rceil
    =\bigg\lceil\frac{1}{t}\log_2\bigg(3\Big(k\log_2\Big(\frac{n}{k}\Big)\Big)^{\epsilon t}\bigg)\bigg\rceil. \label{eq:r_formula_specific}
\end{align}
Recall that we require $C' \log_2n \geq r$, or equivalently $C'\geq r/\log_2n$. Substituting \eqref{eq:r_formula_specific} into $C'\geq r/\log_2n$, we find that we require
\begin{align}
    C'&\geq\frac{\big\lceil\frac{1}{t}\log_2\big(3\big(k\log_2\big(\frac{n}{k}\big)\big)^{\epsilon t}\big)\big\rceil}{\log_2n}, \label{eq:C'_cond}
\end{align}
Since $\epsilon$ is constant, we can choose $C'=O(1)$ that is large enough to satisfy \eqref{eq:C'_cond}. With our choices of $C,C',N,t=O(1)$ and $\beta_n=\Theta\big((k\log n)^{1-\epsilon t}\big)$, we obtain the following:

\begin{itemize}
    \item \textbf{Number of tests:} We used $CNk$ tests per level for $l=\log_2k,\dots,\log_2n-1$. At the final level $l=\log_2n$, we used $CC'Nk\log_2n$ tests. Summing these together gives
    \begin{align}
        T&\leq CNk\log_2\Big(\frac{n}{k}\Big)+CC'Nk\log_2n\stackrel{(a)}{=}O(k\log n), \label{eq:final_test_num}
    \end{align}
    where (a) follows by substituting $C,C',N=O(1)$ and simplifying.
    \item \textbf{Error probability:} Combining the error probabilities from all levels, we have a total error probability of at most
    \begin{align}
        \beta_n+O\big(n^{1-tC'}\big)=O\bigg(\Big(k\log\Big(\frac{n}{k}\Big)\Big)^{1-\epsilon t}\bigg),
    \end{align}
    by substituting $\beta_n=\big(k\log_2(n/k)\big)^{1-\epsilon t}$ and choosing $C'$ sufficiently large.
    \item \textbf{Decoding time:} To characterize the decoding time, we consider the number of test outcome checks made throughout the course of the algorithm. For $l=\log_2k,\dots,\log_2n-1$, w.h.p., we involved $O\big(k\log(n/k)\big)$ nodes in total. For each node involved, we checked at most $\sum_{i=1}^r2^{i}=O\big(2^r\big) \stackrel{\eqref{eq:r_formula_general}}{=} O\big(\big(\frac{k\log(n/k)}{\beta_n}\big)^{1/t}\big)$ intermediate labels of other nodes to decide the final label of the given node. For each these nodes being checked, we checked $N=O(t)$ test outcomes to determine the intermediate label. Therefore, the decoding time for these levels is
    \begin{align}
        O\bigg(k\log\Big(\frac{n}{k}\Big)\cdot\Big(\frac{k\log(n/k)}{\beta_n}\Big)^{1/t}\cdot t\bigg), \label{eq:decoding_time_except_final_level}
    \end{align}
    At the final level $l=\log_2n$, we have already shown that w.h.p., at most $2k$ nodes remain possibly defective. For each such node, we checked $C'\log_2n$ intermediate labels to decide the final label of the given node. To decide each intermediate label, we checked $N=O(t)$ test outcomes. Therefore, the decoding time at this level is $O(2k\cdot C'\log n\cdot t)$. Summing this with \eqref{eq:decoding_time_except_final_level} gives us the total decoding time of 
    \begin{align}
        O\bigg(k\log\Big(\frac{n}{k}\Big)\cdot\Big(\frac{k\log(n/k)}{\beta_n}\Big)^{1/t}\cdot t\bigg)+O(2k\cdot C'\log n\cdot t)
        =O\bigg(\Big(k\log\frac{n}{k}\Big)^{1+\epsilon}\bigg), \label{eq:final_decoding_time}
    \end{align}
    by substituting $C',t=O(1)$ and $\beta_n=\big(k\log_2(n/k)\big)^{1-\epsilon t}$, and noting that the $O(k \log n)$ term is dominated by $O\big(\big(k\log\frac{n}{k}\big)^{1+\epsilon}\big)$ regardless of the scaling of $k$.
\end{itemize}

\section{Non-Binary Trees in the Noisy Setting} \label{sec:non_binary}

\subsection{Unconstrained Noisy Setting}

Our algorithm for the noisy setting in Section \ref{sec:noisy_algo_intro} is based on binary splitting, and combats noise by both (i) performing independent repetitions at each level, and (ii) classifying a given node by exploring levels further down the tree.  Here we discuss an alternative approach based on non-binary splitting, which attains similar results using only the former of these.\footnote{This approach was pointed out by an anonymous reviewer of an earlier version of this paper.}  Despite this, we believe that there is value in also showing that binary splitting suffices, and that our technique of exploring further down the tree may be of independent interest.

The non-binary approach we consider in this section is based on the analysis of the heavy hitters problem in \cite[Sec.~B.2]{Lar19}, which in turn builds on \cite{Cor08}.  Instead of forming a binary tree as in Figure \ref{fig:noisy_algo_diag}, consider forming a $b$-ary tree for some value of $b$ to be chosen later.  Hence, the depth of the tree is $O\big( \frac{\log n}{\log b} \big)$.  

At each level, instead of using $O(1)$ independent repetitions (as was done in Algorithm \ref{alg:noisy_nonadap_testing}), we use $O(\log b)$ repetitions.  Since there are $O\big( \frac{\log n}{\log b} \big)$ levels, and each repetition contains $O(k)$ tests, the total number of tests is $O(k \log n)$.   In addition, by a similar analysis to that of $p_{\text{int}}^{(\text{d})}$ and $p_{\text{int}}^{(\text{nd})}$ in Appendix \ref{sec:noisy_algo_analysis}, each majority vote over these repetitions succeeds with probability at least $1 - \frac{1}{{\rm poly}(b)}$, where the polynomial has arbitrarily high degree.

When all such majority votes are correct, the algorithm only visits $O(kb)$ nodes, and thus, if $b = (k \log n)^{\epsilon}$, the probability of any wrong decision can be made to decay as $\frac{1}{{\rm poly}(k \log n)}$.  While the list size at the final level increases from $O(k)$ (in our binary splitting approach) to $O(bk)$, the final level can still be analyzed in the same way as Appendix \ref{sec:noisy_algo_analysis}, and the total decoding time is $O(kb \log n) = O\big( (k \log n)^{1+\epsilon} )$.  This is equivalent to the decoding time $O\big( \big(k \log \frac{n}{k} \big)^{1+\epsilon} )$ given in Theorem \ref{thm:noisy_main_theorem}, since if $k$ is large enough for $\log\frac{n}{k}$ to significantly differ from $\log n$, then the logarithmic factor can be factored into the $k^{\epsilon}$ term anyway.

\subsection{Noisy Setting with Size-Constrained Tests}

At first glance, it may appear to be difficult to combine our techniques for the size-constrained and noisy settings, since the latter is based on searching $\omega(1)$ levels down the tree, whereas the former uses a tree with depth $O(1)$.  However, even in \cite{Ven19} where the computation time is $\Omega(n)$, moving to the noisy setting increases the number of tests from $O\big(\frac{n}{\rho}\big)$ to $O\big(\frac{n}{\rho} \log n\big)$.  We can incur a similar increase by increasing our tree depth from $O(1)$ to $O(\log n)$, and this added depth permits us to combat noise in the same way as the unconstrained setting.  For the sake of brevity, we omit the details.

\section{Storage Reductions via Hashing} \label{sec:storage_reductions}

For all of our algorithms considered, the storage comprises of storing the assignments of nodes to tests, storing the possibly defective set $\mathcal{PD}$, and storing the test outcomes.  We observe that since every tree that we consider has a final level containing $n$ nodes, storing the test assignments at that level alone requires $\Omega(n)$ storage, meaning that the standard versions of our algorithms do not have sublinear storage.


In order to reduce the storage, we can make modifications to each algorithm in a similar manner to \cite{Eri20}:  Instead of directly storing the test outcomes of every node, we interpret the node-to-test mappings at each level (except for one-to-one mappings) as hash functions. Since the high storage comes from explicitly storing the corresponding test outcomes of nodes, the key to reducing the overall storage is to use lower-storage hash families. 

The reduced storage comes at the expense of reduced independence between different hash values. Fortunately, this drawback has a negligible effect on the guarantees of our algorithm under the noisy setting and size-constrained setting, as the proofs of Theorems \ref{thm:noisy_main_theorem} and \ref{thm:rho_main_theorem} only require pairwise independence or weaker. However, the effect is more significant for our algorithm under the finitely divisible items constraint, as our proof of Theorem \ref{thm:gamma_main_theorem} uses full independence.  In the following, we briefly describe suitable properties and choices for the hash families, and how they affect the algorithmic guarantees.  We let $\mathsf{T}_{\text{hash}}$ and $\mathsf{S}_{\text{hash}}$ respectively denote the evaluation time for one hash value and the number of bits of storage required for one hash function.

\textbf{Finitely divisible items:} Consider using an $O(\gamma)$-wise independent hash family to generate a hash function, with $\mathsf{T}_{\text{hash}}=O(\gamma)$ and $\mathsf{S}_{\text{hash}}=O(\gamma\log n)$ (e.g., see \cite[Section 3.1]{Eri20}). Since the analysis in Appendix \ref{sec:gamma_algo_analysis} requires full independence, a different analysis is required for the algorithmic guarantees. 

To address this, we note that two distinct analyses were given in \cite{Eri20}, with fully independent hashes attaining the stronger result, and limited-independence hashes reducing the storage but increasing the error probability. The latter of these in fact extends to the finitely divisible setting significantly more easily than the former does, so we simply state the corresponding result and omit the proof: For any function $\beta_n$ decaying as $n$ increases, using
\begin{align}
    T&=O\bigg(\gamma k\max\bigg\{\Big(\frac{n}{k}\Big)^{\frac{1}{\gamma'}},
    \Big(\frac{k}{\beta_n}\Big)^{\frac{1}{\gamma-\gamma'+1}}\Big(\frac{n}{k}\Big)^{\frac{1}{\gamma'(\gamma-\gamma'+1)}}\bigg\}\bigg)
\end{align}
tests, the algorithm has $O\big(\mathsf{T}_{\textup{hash}}\gamma k(n/k)^{1/\gamma'}\big)=O\big(\gamma^2k(n/k)^{1/\gamma'}\big)$ runtime, requires a storage of $O\big(k(n/k)^{1/\gamma'}\log n+\mathsf{S}_{\textup{hash}}\gamma+T\big)=O\big(\big(k(n/k)^{1/\gamma'}+\gamma^2\big)\log n+T\big)$ bits, and incurs an error probability of $O(\gamma/k+\beta_n)$.  Thus, we maintain a similar number of tests and decoding time as Theorem \ref{thm:gamma_main_theorem}, but the error probability increases, and in fact only behaves as $o(1)$ in the case that $\gamma=o(k)$ (which occurs, for example, under the mild condition $k = \Omega(\log n)$).

\textbf{Size-constrained tests:} Some care is required here to ensure that the constraints of our design matrix (i.e., fixed row and column weights) are satisfied. Specifically, at each level $l\in\{1,\dots,C\}$, we desire a hash function $h_l:\big\{1,\dots,\frac{n}{\rho^{1-l/C}}\big\}\rightarrow\{1,\dots,n/\rho\}$ such at each ``bucket'' (test) has a ``load'' (number of nodes in the test) of exactly $\rho^{l/C}$. An inspection of our analysis in Appendix \ref{sec:rho_algo_analysis} reveals that we only require the probability of two nodes colliding to be $O(\rho/n)$, i.e., only an approximately pairwise independent family is needed.

To construct the hash function above, we first consider a random permutation $\pi:\{1,\dots,\frac{n}{\rho^{1-l/C}}\}\rightarrow\{1,\dots,\frac{n}{\rho^{1-l/C}}\}$ such that for any $i,i'\in\{1,\dots,\frac{n}{\rho^{1-l/C}}\big\}$, we have $\mathbb{P}[|\pi(i)-\pi(i')|\leq t]=O\big(t\rho^{1-l/C}/n\big)$.  Such permutations are well-understood (e.g., see Definition 4.1 and Lemma 4.1 in \cite{Cev16}), and we can use this to design a hash function $h_l(\cdot)$ in the following manner: First apply the permutation discussed above, and then truncate the last $(l/C)\log_2\rho$ bits of the permutation value.  Then, for any $i,i'\in\{1,\dots,\frac{n}{\rho^{1-l/C}}\big\}$, we have
\begin{align}
    \mathbb{P}[h_l(i)=h_l(i')]
    \stackrel{(a)}{\le}\mathbb{P}[|\pi(i)-\pi(i')|\leq\rho^{l/C}]
    \stackrel{(b)}{=}O\Big(\frac{\rho}{n}\Big),
\end{align}
where (a) holds since if $i$ and $i'$ are in the same bucket, then all their bits except the last $(l/C)\log_2\rho$ bits are the same, and $\pi(i)$ and $\pi(i')$ can be at most $\rho^{l/C}$ (bucket size) apart; then, (b) holds by applying the collision property of our permutation. This proves that the constructed hash function has the required properties.  Moreover, we have $\mathsf{T}_{\text{hash}}=O(1)$ and $\mathsf{S}_{\text{hash}}=O(1)$. 

Given the preceding hash construction, we again provide a brief analysis of the storage as follows: Recall that we use $N=O(1)$ hashes at each level (except $l=0$ and $l=C$), and $C'=O(1)$ hashes at the final level, for a total of $O(1)$ hashes, requiring $O(\mathsf{S}_{\text{hash}}) = O(1)$ storage. In addition, under the high probability event that there are $O(k\rho^{1/C})$ possibly defective nodes at each level, their storage requires $O(k\rho^{1/C})$ integers, or $O(k\rho^{1/C}\log n)=o(n/\rho)$ bits (see Lemma \ref{lem:asymp}). Lastly, we need to store a total of $O(n/\rho)$ test outcomes, each requiring a bit of storage. Hence, the total storage is $O(\mathsf{S}_{\text{hash}}+k\rho^{1/C}\log n+n/\rho)=O(n/\rho)$ bits.

\textbf{Noisy setting:} Since we only need pairwise independence in our analysis in Appendix \ref{sec:noisy_algo_analysis}, we can use any pairwise independent hash family to generate a hash function, which only requires $\mathsf{T}_{\text{hash}}=O(1)$ and $\mathsf{S}_{\text{hash}}=O(\log n)$ (e.g., see \cite[Section 3.1]{Eri20}).   Here the analysis of the number of tests, error probability, and decoding time in Appendix \ref{sec:noisy_algo_analysis} remain unchanged. 

We provide a brief analysis of the storage as follows: Recalling our choices of $C',t, N=O(1)$, we use $N=O(1)$ hashes at each level except the last, and $C'N\log_2n=O(\log n)$ hashes at the final level, for a total of $O(\log n)$ hashes, requiring $O(\mathsf{S}_{\textup{hash}}\log n)$ storage. In addition, for any level $l$, we know that $|\mathcal{PD}^{(l)}|=O(k\log n)$ w.h.p, which implies that the storage required for the possibly defective set is $O(k\log n)$ integers, or $O(k\log^2n)$ bits. Lastly, we need to store a total of $O(k\log n)$ test outcomes, each requiring a bit of storage. The total storage is $O(\mathsf{S}_{\textup{hash}}\log n+k\log^2n+k\log n)=O(k\log^2n)$ by substituting $\mathsf{S}_{\text{hash}}=O(\log n)$.

\bibliographystyle{myIEEEtran}
\bibliography{refs}
 
\end{document}